\documentclass[11pt]{entcs}

\usepackage{entcs-kavvos}
\usepackage{graphicx}
\sloppy

\usepackage[ipcf]{optional}
\usepackage{amsmath}
\usepackage{amssymb}
\usepackage{mathtools}
\usepackage{multicol}
\usepackage{stmaryrd}
\usepackage{prooftree}
\usepackage{enumitem}
\usepackage{framed}
\usepackage{tikz-cd}

\newcommand{\sem}[2]{\left\llbracket #1 \right\rrbracket^{#2}}
\newcommand{\defeq}{\stackrel{\mathclap{\mbox{\tiny def}}}{=}}

\newcommand{\ctxt}[2]{#1\mathbin{;}#2}
\newcommand{\ibox}[1]{\mathsf{box\;}#1}
\newcommand{\letbox}[3]{\mathsf{let\;box\;} #1 \Leftarrow #2 \mathsf{\;in\;} #3}
\newcommand{\fixlob}[2]{\mathsf{fix\;} #1 \mathsf{\;in\;} #2}

\newcommand{\red}{\mathrel{\longrightarrow}}
\newcommand{\redp}{\mathrel{\Longrightarrow}}
\newcommand{\redt}{\mathrel{\longrightarrow^\ast}}

\newcommand{\fv}[1]{\textsc{fv}\left(#1\right)}
\newcommand{\ufv}[1]{\textsc{fv}_0\left(#1\right)}
\newcommand{\bfv}[1]{\textsc{fv}_{\geq 1}\left(#1\right)}
\newcommand{\vars}[1]{\textsc{Vars}\left(#1\right)}

\newlist{indproof}{itemize}{5}
\setlist[indproof]{%
  itemsep=5pt,  
  font={\sc}, 
  label={}
}

\newcommand{\indcase}[1]{\item{\sc Case({\normalfont #1})}.}

\makeatletter
\newcommand{\dotdiv}{\mathbin{\text{\@dotdiv}}}

\newcommand{\@dotdiv}{%
  \ooalign{\hidewidth\raise1ex\hbox{.}\hidewidth\cr$\m@th-$\cr}%
}
\makeatother

\begin{document}

\begin{frontmatter}

\title{Intensionality, Intensional Recursion, \\
  and the G\"odel-L\"ob axiom}

\thanks[myemail]{This is a revised version of the third chapter of
  \cite{Kavvos2017}, which is in turn based on a paper presented at
  the 7th Workshop on Intuitionistic Modal Logic and Applications
  (IMLA 2017). Email: \href{mailto:g.a.kavvos@gmail.com}
  {\texttt{\normalshape g.a.kavvos@gmail.com}}. Current
  affiliation: Department of Computer Science, Aarhus University.}

\author{G. A. Kavvos}

\address{Department of Computer Science \\
  University of Oxford\\
  Oxford, United Kingdom}

\begin{abstract}
  The use of a necessity modality in a typed $\lambda$-calculus
  can be used to separate it into two regions. These can be
  thought of as intensional vs. extensional data: data in the
  first region, the modal one, are available as code, and their
  description can be examined. In contrast, data in the second
  region are only available as values up to ordinary equality.
  This allows us to add non-functional operations at modal types
  whilst maintaining consistency. In this setting, the
  G\"odel-L\"ob axiom acquires a novel constructive reading: it
  affords the programmer the possibility of a very strong kind of
  recursion which enables them to write programs that have access
  to their own code. This is a type of computational reflection
  that is strongly reminiscent of Kleene's Second Recursion
  Theorem.
\end{abstract}

\begin{keyword}
  G\"odel-L\"ob axiom, constructive provability logic, intensionality,
  intensional recursion, modal type theory, modalities in
  programming, modal logic, second recursion theorem
\end{keyword}

\end{frontmatter}


\section{Introduction}

This paper is about putting a logical twist on two old
pieces of programming lore:

\begin{itemize}
  \item
    First, it is about using \emph{modal types} to treat
    \emph{programs-as-data} in a type-safe manner.

  \item
    Second, it is about noticing that---in the context of
    intensional programming---a constructive reading of the
    G\"odel-L\"ob axiom, i.e. $\Box(\Box A \rightarrow A)
    \rightarrow \Box A$, amounts to a strange kind of recursion,
    namely \emph{intensional recursion}.
\end{itemize}

\noindent We will introduce a \emph{typed $\lambda$-calculus with
modal types} that supports both of these features. We will call it
\emph{Intensional PCF}, after the simply-typed $\lambda$-calculus
with $\mathbf{Y}$ introduced by Scott \cite{Scott1993} and Plotkin
\cite{Plotkin1977}.

\subsection{Intensionality and Programs-as-data}

To begin, we want to discuss our notion of
\emph{programs-as-data}. We mean it in a way that is considerably
stronger than the higher-order functional programming with which
we are already familiar, i.e. `functions as first-class
citizens.' In addition to that, our notion hints at a kind of
\emph{homoiconicity}, similar to the one present in the
\textsc{Lisp} family of languages. It refers to the ability given
to a programmer to \emph{quote} code, and carry it around as a
datum; see \cite{Bawden1999} for an instance of that in
\textsc{Lisp}. This ability can be used for
\emph{metaprogramming}, which is the activity of writing programs
that write other programs. Indeed, this is what \textsc{Lisp}
macros excel at \cite{Graham1993}, and what the
\emph{metaprogramming community} has been studying for a long
time; see e.g. \cite{Taha2000,Tsukada2010}. Considering programs as
data---but in an untyped manner---was also the central idea in the
work of the \emph{partial evaluation community}: see
\cite{Jones1993,Jones1996,Jones1997}.

But we would like to go even further. In \textsc{Lisp}, a program
is able to process code by treating it as mere symbols, thereby
disregarding its function and behaviour. This is what we call
\emph{intensionality}: an operation is \emph{intensional} if it is
\emph{finer than equality}. This amounts to a kind of
\emph{non-functional computation}. That this may be done
type-theoretically was suspected by Davies and Pfenning
\cite{Davies2001,Davies2001a}, who introduced modal types to
programming language theory. A system based on nominal techniques
that fleshed out those ideas was presented by Nanevski
\cite{Nanevski2002}. The notions of intensional and extensional
equality implicit in this system were studied using logical
relations by Pfenning and Nanevski \cite{Nanevski2005}. However,
none of these papers studied whether the induced equational
systems are consistent. We show that, no matter the intensional
mechanism at use, modalities enable consistent intensional
programming.

To our knowledge, this paper presents the first consistency proof
for intensional programming.

\subsection{Intensional Recursion}

We also want to briefly explain what we mean by \emph{intensional
recursion}; a fuller discussion may be found in
\cite{Abramsky2014,Kavvos2017}. Most modern programming languages
support \emph{extensional recursion}: in the body of a function
definition, the programmer may make a finite number of calls to
the definiendum itself. Operationally, this leads a function to
examine its own values at a finite set of points at which it has
hopefully already been defined. In the \emph{untyped
$\lambda$-calculus}, with $=_\beta$ standing for
$\beta$-convertibility, this is modelled by the \emph{First
Recursion Theorem (FRT)} \cite[\S 6.1]{Barendregt1984}:

\begin{thm}
  [First Recursion Theorem]
  $\forall f \in \Lambda.\
  \exists u \in \Lambda.\
  u =_\beta fu$.
\end{thm}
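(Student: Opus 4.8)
The plan is to make $u$ explicit via a fixed-point combinator and then verify the equation by a single $\beta$-step. Fix $f \in \Lambda$. Choosing a variable $x \notin \fv{f}$, introduce the self-application term $\Delta_f \equiv \lambda x.\, f(x\,x)$, and take as the candidate fixed point
$u \equiv \Delta_f\,\Delta_f \equiv (\lambda x.\, f(x\,x))(\lambda x.\, f(x\,x))$.

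The one and only computation is then $u \equiv \Delta_f\,\Delta_f \red f(\Delta_f\,\Delta_f) \equiv f\,u$, where the step is the $\beta$-contraction of the head redex and the substitution $f(x\,x)[\Delta_f/x]$ collapses to $f(\Delta_f\,\Delta_f)$ precisely because $x \notin \fv{f}$. Since $\red$ is contained in the $\lambda$-theory's convertibility relation, which is the equality intended in the statement, this already yields $u = f\,u$; in fact we obtain the sharper $u \redt f\,u$, a single step sufficing. Abstracting over $f$ recovers Curry's combinator $\mathsf{Y} \equiv \lambda f.\, (\lambda x.\, f(x\,x))(\lambda x.\, f(x\,x))$, so one may equivalently set $u \equiv \mathsf{Y}\,f$.

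I do not expect any real obstacle: the only points needing a sliver of care are the freshness side-condition governing the substitution and the remark that the theorem's $=$ is the equational theory, for which one $\beta$-reduction is enough. Should one want the stronger property that $f\,u$ also reduces back down to the fixed point, Curry's combinator does not deliver it, and one would instead use Turing's combinator $\Theta \equiv (\lambda x.\lambda y.\, y(x\,x\,y))(\lambda x.\lambda y.\, y(x\,x\,y))$ with $u \equiv \Theta\,f$, for which $u \redt f\,u$ holds verbatim on terms; but this refinement is not required here.
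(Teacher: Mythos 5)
Your proof is correct and is precisely the classical fixed-point argument (Curry's $\mathsf{Y}$): the paper does not prove this theorem itself but imports it from Barendregt, where the same construction $u \equiv (\lambda x.\,f(xx))(\lambda x.\,f(xx)) \red f\,u$ is the standard proof. The only slight wobble is your closing aside: the relevant distinction is that Curry's $\mathsf{Y}$ gives only convertibility of $\mathsf{Y}f$ and $f(\mathsf{Y}f)$ while Turing's $\Theta$ gives $\Theta f \redt f(\Theta f)$ as an actual reduction, but since your $u$ is $\Delta_f\,\Delta_f$ directly (for which $u \red f\,u$ does hold in one step), and the theorem only asserts equality, nothing is at stake.
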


However, as Abramsky \cite{Abramsky2014} notes, in the
\emph{intensional paradigm} we have described above a stronger
kind of recursion is attainable. Instead of merely examining the
result of a finite number of recursive calls, the definiendum can
recursively have access to a \emph{full copy of its own source
code}. This is embodied in Kleene's \emph{Second Recursion Theorem
(SRT)} \cite{Kleene1938}. Here is a version of the SRT in the
untyped $\lambda$-calculus, where $\ulcorner u \urcorner$ means
`the G\"odel number of the term $u$' \cite[\S 6.5, Thm.
6.5.9]{Barendregt1984}.

\begin{thm}
  [Second Recursion Theorem]
  $\forall f \in \Lambda.\
  \exists u \in \Lambda.\
    u =_\beta f \, \ulcorner u \urcorner$.
\end{thm}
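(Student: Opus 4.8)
The plan is to carry out Kleene's diagonal construction inside the pure $\lambda$-calculus, following Barendregt \cite[\S 6.5]{Barendregt1984}. The first ingredient is an effective Gödel numbering $\llceil \cdot \rrceil$ of $\lambda$-terms, under which $\llceil M \rrceil$ is a numeral (say a Church numeral) coding $M$. The construction then rests on two auxiliary combinators: an \emph{application} combinator $\mathsf{App}$ with $\mathsf{App}\, \llceil M \rrceil\, \llceil N \rrceil = \llceil MN \rrceil$, and a \emph{quoting} combinator $\mathsf{Num}$ with $\mathsf{Num}\, \llceil M \rrceil = \llceil \llceil M \rrceil \rrceil$, i.e.\ $\mathsf{Num}$ sends a numeral $\overline{n}$ to $\llceil \overline{n} \rrceil$. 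Both combinators exist because the corresponding operations on codes --- mapping the codes of $M$ and $N$ to the code of $MN$, and mapping $n$ to the code of $\overline{n}$ --- are primitive recursive, hence $\lambda$-definable by the standard equivalence of $\lambda$-definability with Turing computability.

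Given these, and an arbitrary $f \in \Lambda$, I would put
\[
  W \;\equiv\; \lambda x.\ f\, \bigl(\mathsf{App}\, x\, (\mathsf{Num}\, x)\bigr)
  \qquad\text{and}\qquad
  u \;\equiv\; W\, \llceil W \rrceil .
\]
A direct computation then gives
\[
  u \;\redt\; f\, \bigl(\mathsf{App}\, \llceil W \rrceil\, (\mathsf{Num}\, \llceil W \rrceil)\bigr)
    \;=\; f\, \bigl(\mathsf{App}\, \llceil W \rrceil\, \llceil \llceil W \rrceil \rrceil\bigr)
    \;=\; f\, \llceil W\, \llceil W \rrceil \rrceil
    \;=\; f\, \llceil u \rrceil ,
\]
where the first $\beta$-step unfolds $W$, and the two equalities in the middle are exactly the defining equations of $\mathsf{Num}$ and of $\mathsf{App}$. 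Hence $u = f\, \llceil u \rrceil$, as required; this $u$ is the $\lambda$-calculus analogue of the self-reproducing program produced by the SRT, and it is essentially this pattern that the typed system will later have to accommodate.

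The main obstacle is not the diagonalisation, which is immediate once $\mathsf{App}$ and $\mathsf{Num}$ are at hand, but the construction of those two combinators. This needs the Gödel numbering to be chosen concretely enough that the code-level operations are visibly recursive, together with the substantial (but by now routine) theorem that every recursive function is $\lambda$-definable. I would take that machinery as given and only check that the chosen encoding makes $\mathsf{App}$ and $\mathsf{Num}$ satisfy their equations \emph{exactly} on codes --- dull but necessary bookkeeping, since the final equation $u = f\, \llceil u \rrceil$ depends on these holding on the nose.
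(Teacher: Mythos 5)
Your construction is correct and is precisely the standard diagonal argument for the Second Fixed Point Theorem that the paper cites from Barendregt \cite[\S 6.5]{Barendregt1984}: the paper itself gives no proof, only the citation, and your $W \equiv \lambda x.\ f(\mathsf{App}\, x\, (\mathsf{Num}\, x))$ with $u \equiv W\llceil W\rrceil$ is exactly that proof. The only caveat you already flag yourself --- that $\mathsf{App}$ and $\mathsf{Num}$ must satisfy their specifications on the nose for the chosen G\"odel numbering --- is handled in the cited source, so nothing is missing.
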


\noindent Kleene also proved the following, where $\Lambda^0$ is
the set of closed $\lambda$-terms:

\begin{thm}
  [Existence of Interpreter]
  \label{thm:lambdainterp}
  $\exists \mathbf{E} \in \Lambda^0.\
  \forall M \in \Lambda^0.\
  \mathbf{E} \, \ulcorner M \urcorner \rightarrow^\ast M$
\end{thm}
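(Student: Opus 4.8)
The plan is to realise $\mathbf{E}$ as an \emph{environment-passing self-evaluator} built from a fixed-point combinator. First I would fix a G\"odel numbering $\#(-)\colon \Lambda \to \mathbb{N}$, write $\llceil M \rrceil$ for the numeral of $\#(M)$, and choose the numbering so that the three term-formers are mirrored by $\lambda$-definable operations: closed terms deciding whether a code is that of a variable, an application, or an abstraction, together with closed terms projecting out the index $i$ from $\llceil x_i \rrceil$, the pair $\bigl( \llceil M \rrceil, \llceil N \rrceil \bigr)$ from $\llceil M\,N \rrceil$, and the pair $\bigl( i, \llceil M \rrceil \bigr)$ from $\llceil \lambda x_i.\, M \rrceil$. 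This is routine given the standard machinery of the untyped calculus \cite{Barendregt1984}: every partial recursive function is $\lambda$-definable, booleans, pairing and case analysis are $\lambda$-definable, and a fixed-point combinator $\mathbf{Y}$ with $\mathbf{Y}\, G \redt G\, (\mathbf{Y}\, G)$ exists.

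The one genuine design choice is forced by the observation that a closed term applied to a closed input can never reduce to a term with free variables, so $\mathbf{E}$ cannot rebuild $\lambda x.\, x$ out of $\llceil \lambda x.\, x \rrceil$ by a naive bottom-up recursion that emits the occurrence of $x$ before its binder exists. I would therefore introduce an auxiliary closed term $F$ taking a code \emph{and} an \emph{environment} $\rho$ --- itself encoded as a $\lambda$-term, say a function from numerals to terms --- specified by the clauses
\begin{align*}
  F\, \llceil x_i \rrceil\, \rho &\redt \rho\, \overline{i}\,, \\
  F\, \llceil M\,N \rrceil\, \rho &\redt \bigl( F\, \llceil M \rrceil\, \rho \bigr)\bigl( F\, \llceil N \rrceil\, \rho \bigr), \\
  F\, \llceil \lambda x_i.\, M \rrceil\, \rho &\redt \lambda z.\, F\, \llceil M \rrceil\, \bigl( \rho[i \mapsto z] \bigr),
\end{align*}
where $\overline{i}$ is the numeral of the index $i$, $z$ is a fresh variable, and $\rho[i \mapsto z]$ is the evident update (an open term, but one living inside the $\lambda z$ just introduced). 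This recursive specification is turned into an honest closed term by $F \myeq \mathbf{Y}\bigl( \lambda f.\, \lambda c.\, \lambda \rho.\, (\text{a case analysis on } c \text{ using the destructors, calling } f \text{ on the subcodes}) \bigr)$, and I set $\mathbf{E} \myeq \lambda c.\, F\, c\, \rho_0$ for any fixed initial environment $\rho_0$.

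Correctness is then a structural induction on $M$: for every term $M$ and every environment $\rho$ defined on the free variables of $M$ one shows $F\, \llceil M \rrceil\, \rho \redt M[\rho]$, the simultaneous substitution. The variable case is the lookup clause; the application case is immediate from the two induction hypotheses; and the abstraction case applies the induction hypothesis to the body under the extended environment, using that $\beta$-reduction may take place underneath a $\lambda$. Specialising to a closed $M$, where $M[\rho_0] = M$, gives $\mathbf{E}\, \llceil M \rrceil \redt M$, as required.

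The only place where real care is needed --- and hence the expected main obstacle --- is the abstraction clause: one must make sure that the freshly introduced binder scopes exactly as in $M$, and that for nested abstractions, unfolding $F$ inside itself does not recycle the same bound variable and thereby create spurious capture. Working, as is customary, up to $\alpha$-equivalence (Barendregt's variable convention) makes this bookkeeping vanish: the evaluator returns a term $\alpha$-equivalent to $M$, which \emph{is} $M$ under the usual identification of terms; a reader who mistrusts the convention can instead thread an explicit name supply through $F$. Everything else --- the concrete numbering, the destructors, and the encoding and updating of environments --- is standard, and I would simply cite it.
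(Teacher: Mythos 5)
The paper does not prove this statement: it is quoted as a classical theorem of Kleene, and the surrounding text simply defers to the standard treatment of the untyped $\lambda$-calculus in Barendregt. Your environment-passing self-interpreter is exactly that standard construction (an enumerator built from a fixed-point combinator, with the binder reinstated before the body is decoded), and it is correct, including the two points that genuinely need care: using a combinator for which $\mathbf{Y}\,G \redt G\,(\mathbf{Y}\,G)$ holds as a reduction rather than a mere conversion, and reducing under the freshly introduced $\lambda$ in the abstraction clause.
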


\noindent It is not hard to see that, using Theorem
\ref{thm:lambdainterp}, the SRT implies the FRT for closed terms:
given $f \in \Lambda^0$ we let $F \defeq \lambda y.\ f
(\mathbf{E}\,y)$, so that the SRT applied to $F$ yields a term $u$
such that \[
  u =_\beta F \, \ulcorner u \urcorner
    =_\beta f \, (\mathbf{E}\,\ulcorner u \urcorner)
    =_\beta f \, u
\]
It is not at all evident whether the converse holds. This is
because the SRT is a \emph{first-order theorem} that is about
diagonalisation, G\"odel numbers and source code, whereas the FRT
really is about \emph{higher types}: see the discussion in
\cite[\S 2]{Kavvos2017}.

Hence, in the presence of intensional operations, the SRT affords
us with a much stronger kind of recursion. In fact, it allows for
a certain kind of \emph{computational reflection}, or
\emph{reflective programming}, of the same kind envisaged by Brian
Cantwell Smith \cite{Smith1984}. But the programme of Smith's
\emph{reflective tower} involved a rather mysterious construction
with unclear semantics \cite{Friedman1984,Wand1988,Danvy1988},
eventually leading to a theorem that---even in the presence of a
mild reflective construct, the so-called
\textbf{fexpr}---observational equivalence of programs collapses
to $\alpha$-conversion: see Wand \cite{Wand1998}. Similar forays
have also been attempted by the partial evaluation community: see
\cite{Hansen1989,Jones1992,Jones2013}.

We will use modalities to stop intension from flowing back into
extension, so that the aforementioned theorem in
\cite{Wand1998}---which requires unrestricted quoting---will not
apply. We will achieve reflection by internalising the SRT.
Suppose that our terms are typed, and that $u : A$.  Suppose as
well that there is a type constructor $\Box$, so that $\Box A$
means `code of type $A$.' Then certainly $\ulcorner u \urcorner : \Box
A$, and $f$ is forced to have type $\Box A \rightarrow A$. A
logical reading of the SRT is then the following: for every $f :
\Box A \rightarrow A$, there exists a $u : A$ such that $u = f \,
\ulcorner u \urcorner$. This corresponds to \emph{L\"ob's rule} from
\emph{provability logic} \cite{Boolos1994}, namely \[
  \begin{prooftree}
    \Box A \rightarrow A
      \justifies
    A
  \end{prooftree}
\] which is equivalent to adding the G\"odel-L\"ob axiom to the
logic. In fact, the punchline of this paper is that \emph{the type
of the Second Recursion Theorem is the G\"odel-L\"ob axiom of
provability logic}.

To our knowledge, this paper presents the first sound, type-safe
attempt at reflective programming.

\subsection{Prospectus}

In \S\ref{sec:ipcfintro} we will introduce the syntax of iPCF, and
in \S\ref{sec:ipcfmeta} we will show that it satisfies basic
metatheoretic properties. Following that, in section
\S\ref{sec:ipcfconfl} we will add intensional operations to iPCF.
By proving that the resulting notion of reduction is confluent, we
will obtain consistency for the system. We then look at the
computational behaviour of some important terms in
\S\ref{sec:ipcfterms}, and conclude with two key examples of the
new powerful features of our language in \S\ref{sec:ipcfexamples}.

\section{Introducing Intensional PCF}
  \label{sec:ipcfintro}

Intensional PCF (iPCF) is a typed $\lambda$-calculus with modal
types. As discussed before, the modal types work in our favour by
separating intension from extension, so that the latter does not
leak into the former. Given the logical flavour of our previous
work on categorical models of intensionality \cite{Kavvos2017a},
we shall model the types of iPCF after the \emph{constructive
modal logic \textsf{S4}}, in the dual-context style pioneered by
Pfenning and Davies \cite{Davies2001,Davies2001a}. Let us seize
this opportunity to remark that (a) there are also other ways to
capture \textsf{S4}, for which see the survey \cite{Kavvos2016b},
and that (b) dual-context formulations are not by any means
limited to \textsf{S4}: they began in the context of
\emph{intuitionistic linear logic} \cite{Barber1996}, but have
recently been shown to also encompass other modal logics: see
\cite{Kavvos2017b}.

iPCF is \emph{not} related to the language \textsf{Mini-ML} that
is introduced by \cite{Davies2001a}: that is a call-by-value,
ML-like language, with ordinary call-by-value fixed points. In
contrast, ours is a call-by-name language with a new kind of fixed
point, namely intensional fixed points. These fixed points will
afford the programmer the full power of \emph{intensional
recursion}.  In logical terms they correspond to throwing the
G\"odel-L\"ob axiom $\Box(\Box A \rightarrow A) \rightarrow \Box
A$ into \textsf{S4}. Modal logicians might object to this, as, in
conjunction with the \textsf{T} axiom $\Box A \rightarrow A$, it
will make every type inhabited. We remind them that a similar
situation occurs in PCF, where the $\mathbf{Y}_A : (A \rightarrow
A) \rightarrow A$ combinator allows one to write a term
$\mathbf{Y}_A(\lambda x : A.\ x)$ at every type $A$. As in the
study of PCF, we care less about the logic and more about the
underlying computation: \emph{it is the terms that matter, and the
types are only there to stop basic type errors from happening}.

The syntax and the typing rules of iPCF may be found in Figure
\ref{fig:ipcf}. These are largely the same as Pfenning and Davies'
\textsf{S4}, save the addition of some constants (drawn from PCF),
and a rule for intensional recursion. The introduction rule for
the modality restricts terms under a $\ibox{(-)}$ to those
containing only modal variables, i.e. variables carrying only
intensions or code, but never `live values:' \[
  \begin{prooftree}
    \ctxt{\Delta}{\cdot} \vdash M : A
      \justifies
    \ctxt{\Delta}{\Gamma} \vdash \ibox{M} : \Box A
  \end{prooftree}
\] There is also a rule for intensional recursion: \[
  \begin{prooftree}
    \ctxt{\Delta}{z : \Box A} \vdash M : A
      \justifies
    \ctxt{\Delta}{\Gamma} \vdash \fixlob{z}{M} : A
  \end{prooftree}
\]
This will be coupled with the reduction $\fixlob{z}{M} \red{}
M[\ibox{(\fixlob{z}{M})}/z]$. This rule is actually just
\emph{L\"ob's rule} with a modal context, and including it in the
Hilbert system of a (classical or intuitionistic) modal logic is
equivalent to including the G\"odel-L\"ob axiom: see
\cite{Boolos1994} and \cite{Ursini1979a}.  Finally, let us record
the fact that erasing the modality from the types appearing in
either L\"ob's rule or the G\"odel-L\"ob axiom yields the type of
$\mathbf{Y}_A : (A \rightarrow A) \rightarrow A$, as a rule in the
first case, or axiomatically internalised as a constant in the
second (both variants exist in the literature: see
\cite{Gunter1992} and \cite{Mitchell1996}). A similar observation
for a stronger form of the L\"ob axiom underlies the stream of
work on \emph{guarded recursion} \cite{Nakano2000,Birkedal2012};
we recommend the survey \cite{Litak2014} for a broad coverage of
constructive modalities with a provability-like flavour.

\begin{figure}
  \caption{Syntax and Typing Rules for Intensional PCF}
  \label{fig:ipcf}
  \begin{framed}
    \begin{align*}
  \textbf{Ground Types} \quad &
    G & ::=\quad &\textsf{Nat} \;|\; \textsf{Bool}
   \\ \\
  \textbf{Types} \quad &
    A, B & ::=\quad &G \;|\; A \rightarrow B \;|\; \Box A
   \\ \\
  \textbf{Terms} \quad &
    M, N & ::=\quad &x
      \;|\; \lambda x{:}A.\ M
      \;|\; M N
      \;|\; \ibox{M}
      \;|\; \letbox{u}{M}{N} \;| \\
   &      &    &\widehat{n}
      \;|\; \textsf{true}
      \;|\; \textsf{false}
      \;|\; \textsf{succ}
      \;|\; \textsf{pred}
      \;|\; \textsf{zero?}
      \;|\; \supset_G
      \;|\; \fixlob{z}{M}
    \\ \\
  \textbf{Contexts} \quad &
    \Gamma, \Delta & ::=\quad &\cdot \;|\; \Gamma, x: A
\end{align*}

\vfill

\renewcommand{\arraystretch}{3}

\begin{tabular}{c c}
  $
    \begin{prooftree}
      \justifies
        \ctxt{\Delta}{\Gamma} \vdash \widehat{n} : \textsf{Nat}
    \end{prooftree}
  $
  
  &

  $   
    \begin{prooftree}
      \justifies
        \ctxt{\Delta}{\Gamma} \vdash b : \textsf{Bool}
      \using
        (b \in \{\textsf{true}, \textsf{false}\})
    \end{prooftree}
  $
  
  \\

  $ 
    \begin{prooftree}
      \justifies
        \ctxt{\Delta}{\Gamma} \vdash \textsf{zero?} : 
          \textsf{Nat} \rightarrow \textsf{Bool}
    \end{prooftree}
  $
  
  &

  $ 
    \begin{prooftree}
      \justifies
        \ctxt{\Delta}{\Gamma}
        \vdash f : \textsf{Nat} \rightarrow \textsf{Nat}
      \using
        (f \in \{\textsf{succ}, \textsf{pred}\})
    \end{prooftree}
  $
  
  \\

  \multicolumn{2}{c}{
  $
    \begin{prooftree}
      \justifies
        \ctxt{\Delta}{\Gamma} \vdash {\supset_G}  : 
        \textsf{Bool} \rightarrow G \rightarrow G \rightarrow G
    \end{prooftree}
  $
  }

  \\


  $
    \begin{prooftree}
        \justifies
          \ctxt{\Delta}{\Gamma, x{:}A, \Gamma'} \vdash x:A
        \using
          {(\textsf{var})}
    \end{prooftree}
  $

  &

  $
    \begin{prooftree}
        \justifies
      \ctxt{\Delta, u{:} A, \Delta'}{\Gamma} \vdash u:A
        \using
      {(\Box\textsf{var})}
    \end{prooftree}
  $

  \\


  $
    \begin{prooftree}
      \ctxt{\Delta}{\Gamma}, x{:}A \vdash M : B
        \justifies
      \ctxt{\Delta}{\Gamma} \vdash \lambda x{:}A. \; M : A \rightarrow B
        \using
      {(\rightarrow\mathcal{I})}
    \end{prooftree}
  $

  &

  $
    \begin{prooftree}
      \ctxt{\Delta}{\Gamma} \vdash M : A \rightarrow B
        \quad
      \ctxt{\Delta}{\Gamma} \vdash N : A
        \justifies
      \ctxt{\Delta}{\Gamma} \vdash M N : B
        \using
      {(\rightarrow\mathcal{E})}
    \end{prooftree}
  $

  \\


  $
    \begin{prooftree}
      \ctxt{\Delta}{\cdot} \vdash M : A
        \justifies
      \ctxt{\Delta}{\Gamma} \vdash \ibox{M} : \Box A
        \using
      {(\Box\mathcal{I})}
    \end{prooftree}
  $

  &

  $
    \begin{prooftree}
      \ctxt{\Delta}{\Gamma} \vdash M : \Box A
        \quad\quad
      \ctxt{\Delta, u{:}A}{\Gamma} \vdash N : C
        \justifies
      \ctxt{\Delta}{\Gamma} \vdash \letbox{u}{M}{N} : C
        \using
      {(\Box\mathcal{E})}
    \end{prooftree}
  $

  \\

  \multicolumn{2}{c}{
    $
      \begin{prooftree}
        \ctxt{\Delta}{z : \Box A} \vdash M : A
          \justifies
        \ctxt{\Delta}{\Gamma} \vdash \fixlob{z}{M} : \Box A
          \using
        {(\Box\textsf{fix})}
      \end{prooftree}
    $
  }
\end{tabular}

  \end{framed}
\end{figure}

\section{Metatheory}
  \label{sec:ipcfmeta}

iPCF satisfies the expected basic results: structural and cut
rules are admissible. This is no surprise given its origin in the
well-behaved Davies-Pfenning calculus. We assume the typical
conventions for $\lambda$-calculi: terms are identified up to
$\alpha$-equivalence, for which we write $\equiv$, and
substitution $[\cdot / \cdot]$ is defined in the ordinary,
capture-avoiding manner. Bear in mind that we consider occurrences
of $u$ in $N$ to be bound in $\letbox{u}{M}{N}$.  Contexts
$\Gamma$, $\Delta$ are lists of type assignments $x : A$.
Furthermore, we shall assume that whenever we write a judgement
like $\ctxt{\Delta}{\Gamma} \vdash M : A$, then $\Delta$ and
$\Gamma$ are \emph{disjoint}, in the sense that $\vars{\Delta}
\cap \vars{\Gamma} = \emptyset$, where $\vars{x_1 : A_1, \dots,
x_n : A_n} \defeq \{x_1, \dots, x_n\}$.  We write $\Gamma, \Gamma'$
for the concatenation of disjoint contexts. Finally, we sometimes
write $\vdash M : A$ whenever $\ctxt{\cdot}{\cdot} \vdash M : A$.

\begin{thm}[Structural \& Cut]
  \label{thm:scut}
  The following rules are admissible in iPCF:
  \begin{multicols}{2}
  \begin{enumerate}
    \item (Weakening) \[
      \begin{prooftree}
        \ctxt{\Delta}{\Gamma, \Gamma'} \vdash M : A
          \justifies
        \ctxt{\Delta}{\Gamma, x : A, \Gamma'} \vdash M : A
      \end{prooftree}
    \]
    \item (Exchange) \[
      \begin{prooftree}
        \ctxt{\Delta}{\Gamma, x : A, y : B, \Gamma'} \vdash M : C
          \justifies
        \ctxt{\Delta}{\Gamma, y : B, x : A, \Gamma'} \vdash M : C
      \end{prooftree}
    \]
    \item (Contraction) \[
      \begin{prooftree}
        \ctxt{\Delta}{\Gamma, x : A, y : A, \Gamma'} \vdash M : A
          \justifies
        \ctxt{\Delta}{\Gamma, w : A, \Gamma'} \vdash M[w, w/x, y] : A
      \end{prooftree}
    \]
    \item (Cut) \[
      \begin{prooftree}
        \ctxt{\Delta}{\Gamma} \vdash N : A
          \qquad
        \ctxt{\Delta}{\Gamma, x : A, \Gamma'} \vdash M : A
          \justifies
        \ctxt{\Delta}{\Gamma, \Gamma'} \vdash M[N/x] : A
      \end{prooftree}
    \]
  \end{enumerate}
  \end{multicols}
\end{thm}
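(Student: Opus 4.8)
The plan is to prove all four statements simultaneously by induction on the derivation of the typing judgment in the hypothesis, which is the standard technique for dual-context systems in the style of Pfenning and Davies. The key structural observation that makes this go through is that the extra context $\Delta$ of modal variables plays a purely "passive" role here: weakening, exchange, and contraction are being asserted only for the \emph{intuitionistic} (extensional) context $\Gamma$, so in every rule the modal context is simply threaded through unchanged. For each of the four properties I would set up the induction on the derivation of the rightmost premise $\ctxt{\Delta}{\Gamma,\dots} \vdash M : A$ (respectively $\ctxt{\Delta}{\Gamma, x{:}A, \Gamma'} \vdash M : A$ for Cut), and in each case inspect the last rule applied.

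First I would handle the variable cases. For $(\textsf{var})$ the term is some $y$, and I split on whether $y$ is the distinguished variable being weakened/exchanged/contracted/cut or not; in the "cut" subcase I return $N$ itself (using reflexivity of the context on the left of Cut), and otherwise I re-apply $(\textsf{var})$ to the modified context. For $(\Box\textsf{var})$ the variable lives in $\Delta$, which is untouched, so the result is immediate. For the structural rules $(\rightarrow\mathcal{I})$, $(\rightarrow\mathcal{E})$ and the PCF-constant axioms I would just apply the induction hypothesis to the premises and re-apply the same rule; note that for $(\rightarrow\mathcal{I})$ one needs to first use exchange to move the newly-bound variable $x{:}A$ past the variable being manipulated, which is where property (2) feeds into the proofs of (1), (3) and (4), so I would prove (2) first and then the others, each relying on the previously established ones. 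The interesting modal cases are $(\Box\mathcal{I})$, $(\Box\mathcal{E})$, and the L\"ob/fixpoint rule. For $(\Box\mathcal{I})$ and the fixpoint rule, the premise has an \emph{empty} or a \emph{fixed singleton} intuitionistic context ($\cdot$, resp. $z{:}\Box A$) that does not mention $\Gamma$ at all, so the variable being weakened/exchanged/contracted/cut simply does not occur and the conclusion follows by re-deriving with the enlarged $\Gamma$ — in particular the term $M$ under the $\ibox{(-)}$ or $\fixbox{z}{(-)}$ is left untouched, which is exactly why substitution into it is vacuous. For $(\Box\mathcal{E})$ the subterm $N$ is typed in context $\ctxt{\Delta, u{:}A}{\Gamma}$, so the induction hypothesis applies to it with the same manipulated $\Gamma$, and we re-apply $(\Box\mathcal{E})$; capture-avoidance for the bound modal variable $u$ is handled by the usual $\alpha$-renaming convention.

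The main obstacle I anticipate is purely bookkeeping rather than conceptual: keeping the capture-avoiding substitution and the $\alpha$-equivalence conventions straight in the cases that bind variables ($\rightarrow\mathcal{I}$, $\letbox{u}{-}{-}$, $\fixbox{z}{-}$), and making sure that in the Cut case the substituted term $M[N/x]$ is well-formed when $x$ occurs under a $\ibox{(-)}$ — it does not, since those subterms are typed in a context not containing $x$, so the substitution is vacuous there and correctness is immediate. There is a mild subtlety in ordering the four proofs: since $(\rightarrow\mathcal{I})$ requires exchange, I would establish Exchange (2) independently first (its own proof needs only itself at $(\rightarrow\mathcal{I})$), then Weakening (1), then Contraction (3) and Cut (4), each invoking the earlier ones as needed. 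No step requires anything beyond routine case analysis once this dependency order is fixed.
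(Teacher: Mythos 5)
Your proposal is correct and follows essentially the same route as the paper: a straightforward induction on the typing derivation, with the only genuinely new case being the L\"ob/fixpoint rule (and $(\Box\mathcal{I})$), where the premise's fixed intuitionistic context ($\cdot$, resp.\ $z{:}\Box A$) makes the property immediate --- exactly the observation the paper records. One minor remark: because the statements are already formulated with insertion at an arbitrary position $\Gamma, x{:}A, \Gamma'$, the $(\rightarrow\mathcal{I})$ case goes through by instantiating the induction hypothesis with $\Gamma'' = \Gamma', y{:}B$, so the dependency on Exchange you build into your ordering is harmless but not actually needed.
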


\opt{th}{
  \begin{proof}
    All by induction on the typing derivation of $M$. Verified in
    the proof assistant \textsc{Agda}: see Appendix
    \ref{sec:iPCF.agda}.
  \end{proof}
}

\begin{thm}[Modal Structural \& Cut]
  \label{thm:modalstruct}
  The following rules are admissible:
  \begin{multicols}{2}
  \begin{enumerate}
    \item (Modal Weakening) \[
      \begin{prooftree}
        \ctxt{\Delta, \Delta' }{\Gamma} \vdash M : C
          \justifies
        \ctxt{\Delta, u : A, \Delta'}{\Gamma} \vdash M : C
      \end{prooftree}
    \]
    \item (Modal Exchange) \[
      \begin{prooftree}
        \ctxt{\Delta, x : A, y : B, \Delta'}{\Gamma} \vdash M : C
          \justifies
        \ctxt{\Delta, y : B, x : A, \Delta'}{\Gamma} \vdash M : C
      \end{prooftree}
    \]
    \item (Modal Contraction) \[
      \begin{prooftree}
        \ctxt{\Delta, x : A, y : A, \Delta'}{\Gamma} \vdash M : C
          \justifies
        \ctxt{\Delta, w : A, \Delta'}{\Gamma} \vdash M[w, w/x, y] : C
      \end{prooftree}
    \]
    \item (Modal Cut) \[
      \begin{prooftree}
        \ctxt{\Delta}{\cdot} \vdash N :  A
          \quad
        \ctxt{\Delta, u : A, \Delta'}{\Gamma} \vdash M : C
          \justifies
        \ctxt{\Delta, \Delta'}{\Gamma} \vdash M[N/u] : C
      \end{prooftree}
    \]
  \end{enumerate}
  \end{multicols}
\end{thm}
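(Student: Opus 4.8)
The plan is to prove all four rules by induction on the typing derivation of $M$, carried out in the order listed: first Modal Weakening, Modal Exchange, and Modal Contraction—none of which depends on Modal Cut—and then Modal Cut, which will invoke the first three together with ordinary Weakening from Theorem~\ref{thm:scut}. As with Theorem~\ref{thm:scut}, nearly every case is inherited essentially verbatim from the corresponding argument for the Davies--Pfenning system \textsf{DS4} (see \citep{Kavvos2017b}). The three structural rules go through because every typing rule of iPCF passes the modal context $\Delta$ to its premises either unchanged or extended on the right (as in $(\Box\mathcal{E})$, where the second premise gains $u{:}A'$, and in $(\Box\textsf{fix})$, which additionally replaces $\Gamma$ by $z{:}\Box A$), so the inductive hypothesis always applies to the premises after reindexing $\Delta'$ appropriately.

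For Modal Cut, I would induct on the derivation of $\ctxt{\Delta, u{:}A, \Delta'}{\Gamma} \vdash M : C$. The base case that does the real work is $(\Box\textsf{var})$ with $M \equiv u$: then $C = A$, the substituted term is $N$ itself, and we must produce $\ctxt{\Delta, \Delta'}{\Gamma} \vdash N : A$ from the hypothesis $\ctxt{\Delta}{\cdot} \vdash N : A$—this follows by iterated ordinary Weakening (Theorem~\ref{thm:scut}) to install $\Gamma$ and iterated Modal Weakening to install $\Delta'$. Any other modal variable, any ordinary variable, and all the PCF constants are immediate. The congruence cases $(\rightarrow\mathcal{I})$ and $(\rightarrow\mathcal{E})$ merely push the substitution under the term constructor and appeal to the inductive hypothesis, after the usual $\alpha$-renaming so that the binder of a $\lambda$ does not occur free in $N$.

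The three genuinely modal cases are $(\Box\mathcal{I})$, $(\Box\mathcal{E})$, and $(\Box\textsf{fix})$. In $(\Box\mathcal{I})$, $M \equiv \ibox{M'}$ with premise $\ctxt{\Delta, u{:}A, \Delta'}{\cdot} \vdash M' : A'$; the inductive hypothesis gives $\ctxt{\Delta, \Delta'}{\cdot} \vdash M'[N/u] : A'$, and re-applying $(\Box\mathcal{I})$ yields $\ctxt{\Delta, \Delta'}{\Gamma} \vdash \ibox{M'[N/u]} : \Box A'$; since $u$ is a modal variable we have $(\ibox{M'})[N/u] \equiv \ibox{M'[N/u]}$, which is exactly the required conclusion. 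In $(\Box\mathcal{E})$, $M \equiv \letbox{v}{M_1}{M_2}$: the inductive hypothesis on the first premise handles $M_1$, and on the second premise—whose modal context is of the shape $\ctxt{\Delta, u{:}A, \Delta'', }{\Gamma}$ with $\Delta'' = \Delta', v{:}A'$—handles $M_2$; re-applying $(\Box\mathcal{E})$, with $v$ chosen fresh for $N$, closes the case. For $(\Box\textsf{fix})$, $M \equiv \fixbox{z}{M'}$ with premise $\ctxt{\Delta, u{:}A, \Delta'}{z{:}\Box A'} \vdash M' : A'$; the inductive hypothesis applies here with ordinary context $z{:}\Box A'$ in place of $\Gamma$—nothing in the statement of Modal Cut forces the ordinary context to be empty—giving $\ctxt{\Delta, \Delta'}{z{:}\Box A'} \vdash M'[N/u] : A'$, and $(\Box\textsf{fix})$ then delivers $\ctxt{\Delta, \Delta'}{\Gamma} \vdash \fixbox{z}{M'[N/u]} : \Box A'$.

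The main obstacle is bookkeeping rather than mathematics: one must keep track of the fact that Modal Cut substitutes a term $N$ typed in an \emph{empty} ordinary context, which is precisely what makes the substitution commute with $\ibox{(-)}$ and with entering the body of a $\fixbox{z}{-}$, and one must organise the induction so that Modal Cut is free to call the already-established weakening, exchange, and contraction lemmas. There is no circularity, since those three are proved without any appeal to cut.
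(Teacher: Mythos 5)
Your proposal is correct and follows essentially the same route as the paper: induction on the typing derivation of $M$, with all cases standard except the genuinely modal ones, and in particular the $(\Box\textsf{fix})$ case of Modal Cut handled exactly as the paper does---applying the inductive hypothesis with ordinary context $z{:}\Box A'$ and re-applying the rule, using the fact that the substitution commutes with the binder. Your extra detail on the $(\Box\textsf{var})$ base case (installing $\Gamma$ and $\Delta'$ by ordinary and modal weakening) and on the non-circular ordering of the four rules is sound and merely makes explicit what the paper leaves implicit.
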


\subsection{Free variables}

In this section we prove a theorem regarding the occurrences of
free variables in well-typed terms of iPCF. It turns out that, if
a variable occurs free under a $\ibox{(-)}$ construct, then it has
to be in the modal context. This is the property that enforces
that \emph{intensions can only depend on intensions}.

\begin{defn}[Free variables] \hfill
  \begin{enumerate}
    \item The \emph{free variables} $\fv{M}$ of a term
    $M$ are defined by induction on the structure of the term:
    \begin{align*}
      \fv{x}  &\defeq \{x\} &
      \fv{MN} &\defeq \fv{M} \cup \fv{N} \\
      \fv{\lambda x : A.\ M} &\defeq \fv{M} - \{x\} &
      \fv{\ibox{M}} &\defeq \fv{M} \\
      \fv{\fixlob{z}{M}} &\defeq \fv{M} - \{z\}
    \end{align*}
    as well as \[
      \fv{\letbox{u}{M}{N}} \defeq \fv{M} \cup \left(\fv{N} - \{u\}\right)
    \] and $\fv{c} \defeq \emptyset$ for any constant $c$.

    \item The \emph{unboxed free variables} $\ufv{M}$ of a term
    are those that do \emph{not} occur under the scope of a
    $\ibox{(-)}$ or $\fixlob{z}{(-)}$ construct. They are formally
    defined by replacing the following clauses in the definition
    of $\fv{-}$:
    \begin{align*}
      \ufv{\ibox{M}} &\defeq \emptyset &
      \ufv{\fixlob{z}{M}} &\defeq \emptyset
    \end{align*}

    \item The \emph{boxed free variables} $\bfv{M}$ of a term $M$
    are those that \emph{do} occur under the scope of a
    $\ibox{(-)}$ construct. They are formally defined by replacing
    the following clauses in the definition of $\fv{-}$:
    \begin{align*}
      \bfv{x} &\defeq \emptyset &
      \bfv{\ibox{M}} &\defeq \fv{M} \\
      \bfv{\fixlob{z}{M}} &\defeq \fv{M} - \{z\}
    \end{align*}
  \end{enumerate}
\end{defn}

\begin{thm}[Free variables] \hfill
  \label{thm:freevar}
  \begin{enumerate}

    \item For every term $M$, $\fv{M} = \ufv{M} \cup \bfv{M}$.

    \item If and $\ctxt{\Delta}{\Gamma} \vdash M : A$, then
      \begin{align*}
	\ufv{M} &\subseteq \vars{\Gamma} \cup \vars{\Delta} \\
\bfv{M} &\subseteq \vars{\Delta}
     \end{align*}

  \end{enumerate}
\end{thm}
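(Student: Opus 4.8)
The plan is to prove the two parts in order, since the second invokes the first.

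For part (1) I would proceed by structural induction on $M$. The only set-theoretic operations appearing in the defining clauses of $\fv{-}$, $\ufv{-}$ and $\bfv{-}$ are binary union and removal of a singleton, and both distribute over union: $(S_1 \cup S_2) - \{x\} = (S_1 - \{x\}) \cup (S_2 - \{x\})$, and $(S_1 \cup S_2) \cup (T_1 \cup T_2) = (S_1 \cup T_1) \cup (S_2 \cup T_2)$. Hence in the cases for application, $\lambda$-abstraction and $\letbox{u}{M}{N}$ the identity $\fv{-} = \ufv{-} \cup \bfv{-}$ follows at once from the induction hypotheses on the immediate subterms, and for variables and constants it is a one-line check. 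The cases $\ibox{M}$ and $\fixbox{z}{M}$ do not even need the induction hypothesis: by definition $\ufv{\ibox{M}} = \emptyset$ and $\bfv{\ibox{M}} = \fv{M}$, so their union is $\fv{M} = \fv{\ibox{M}}$; similarly $\ufv{\fixbox{z}{M}} \cup \bfv{\fixbox{z}{M}} = \emptyset \cup (\fv{M} - \{z\}) = \fv{\fixbox{z}{M}}$.

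For part (2) I would induct on the typing derivation of $\ctxt{\Delta}{\Gamma} \vdash M : A$. The axioms $(\textsf{var})$, $(\Box\textsf{var})$ and the constant rules are immediate, using $\vars{\Gamma, x{:}A} = \vars{\Gamma} \cup \{x\}$ and that $\bfv{x} = \emptyset \subseteq \vars{\Delta}$. For $(\rightarrow\mathcal{E})$ both bounds pass straight through the two induction hypotheses, since the relevant sets are plain unions. For $(\rightarrow\mathcal{I})$ the induction hypothesis at the premise gives $\ufv{M} \subseteq \vars{\Gamma} \cup \{x\} \cup \vars{\Delta}$ and $\bfv{M} \subseteq \vars{\Delta}$, whence $\ufv{\lambda x{:}A.\ M} = \ufv{M} - \{x\} \subseteq \vars{\Gamma} \cup \vars{\Delta}$ and $\bfv{\lambda x{:}A.\ M} = \bfv{M} - \{x\} \subseteq \bfv{M} \subseteq \vars{\Delta}$. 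The rule $(\Box\mathcal{E})$ is handled the same way, removing the modal variable $u$ from the bounds obtained for $N$: here both induction hypotheses contribute, and deleting $u$ only shrinks the sets.

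The cases $(\Box\mathcal{I})$ and $(\Box\textsf{fix})$ are the real content, and the only place where part (1) is used. For $(\Box\mathcal{I})$: from $\ctxt{\Delta}{\cdot} \vdash M : A$ the induction hypothesis yields $\ufv{M} \subseteq \vars{\Delta}$ and $\bfv{M} \subseteq \vars{\Delta}$ (the ordinary context in the premise being empty), so by part (1) $\fv{M} = \ufv{M} \cup \bfv{M} \subseteq \vars{\Delta}$. Then $\ufv{\ibox{M}} = \emptyset \subseteq \vars{\Gamma} \cup \vars{\Delta}$ trivially, and $\bfv{\ibox{M}} = \fv{M} \subseteq \vars{\Delta}$, as required. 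The case $(\Box\textsf{fix})$ is identical except that the premise has context $z{:}\Box A$ in place of $\cdot$, so part (1) gives $\fv{M} \subseteq \{z\} \cup \vars{\Delta}$, and hence $\bfv{\fixbox{z}{M}} = \fv{M} - \{z\} \subseteq \vars{\Delta}$. There is no genuine obstacle here: the argument simply records that the $(\Box\mathcal{I})$ and $(\Box\textsf{fix})$ rules force the ordinary context of the premise to be (almost) empty, so everything free in $M$ — boxed or not — must already live in $\Delta$; part (1) is precisely what turns the separate bounds on $\ufv{M}$ and $\bfv{M}$ into a single bound on all of $\fv{M}$, which is what the boxed-free-variable clause then demands.
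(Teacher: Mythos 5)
Your proof is correct and follows essentially the same route as the paper: a routine structural induction for part (1), then induction on the typing derivation for part (2), using part (1) precisely in the $(\Box\mathcal{I})$ and $(\Box\textsf{fix})$ cases to convert the separate bounds on $\ufv{M}$ and $\bfv{M}$ into a bound on $\fv{M}$. The paper only writes out the $(\Box\mathcal{I})$ case; your treatment of $(\Box\textsf{fix})$ and the remaining rules matches what it leaves implicit.
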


\begin{proof} \hfill
  \begin{enumerate}
    \item
      Trivial induction on $M$.
    \item
      By induction on the derivation of $\ctxt{\Delta}{\Gamma}
      \vdash M : A$.
  \end{enumerate}
\end{proof}

\section{Consistency of Intensional Operations}
  \label{sec:ipcfconfl}

In this section we shall prove that the modal types of iPCF enable
us to consistently add intensional operations on the modal types.
These are \emph{non-functional operations on terms} which are not
ordinarily definable because they violate equality.  All we have
to do is assume them as constants at modal types, define their
behaviour by introducing a notion of reduction, and then
prove that the compatible closure of this notion of reduction is
confluent. A known corollary of confluence is that the equational
theory induced by the reduction is \emph{consistent}, i.e. does
not equate all terms.

There is a caveat involving extension flowing into intension. That
is: we need to exclude from consideration terms where a variable
bound by a $\lambda$ occurs under the scope of a $\ibox{(-)}$
construct. These will never be well-typed, but---since we discuss
types and reduction orthogonally---we also need to explicitly
exclude them here too.

\subsection{Adding intensionality}

\opt{ipcf}{Davies and Pfenning} \cite{Davies2001} suggested that
the $\Box$ modality can be used to signify intensionality. In
fact, in \cite{Davies2001,Davies2001a} they had prevented
reductions from happening under $\ibox{(-)}$ construct, `` [...]
since this would violate its intensional nature.'' But the truth
is that neither of these presentations included any genuinely
non-functional operations at modal types, and hence their only use
was for homogeneous staged metaprogramming. Adding intensional,
non-functional operations is a more difficult task. Intensional
operations are dependent on \emph{descriptions} and
\emph{intensions} rather than \emph{values} and \emph{extensions}.
Hence, unlike reduction and evaluation, they cannot be blind to
substitution. This is something that quickly came to light as soon
as Nanevski \cite{Nanevski2002} attempted to extend the system of
Davies and Pfenning to allow `intensional code analysis' using
nominal techniques.

A similar task was also recently taken up by Gabbay and Nanevski
\cite{Gabbay2013}, who attempted to add a construct
$\textsf{is-app}$ to the system of Davies and Pfenning, along with
the reduction rules \begin{align*}
  \textsf{is-app}\ (\ibox{PQ}) &\red{} \textsf{true} \\
  \textsf{is-app}\ (\ibox{M})  &\red{} \textsf{false}
    \qquad \text{if $M$ is not of the form $PQ$}
\end{align*} The function computed by $\textsf{is-app}$ is truly
intensional, as it depends solely on the syntactic structure of
its argument: it merely checks if it syntactically is an
application or not. As such, it can be considered a
\emph{criterion of intensionality}, albeit an extreme one: its
definability conclusively confirms the presence of computation up
to syntax.

Gabbay and Nanevski tried to justify the inclusion of
$\textsf{is-app}$ by producing denotational semantics for modal
types in which the semantic domain $\sem{\Box A}{}$ directly
involves the actual closed terms of type $\Box A$. However,
something seems to have gone wrong with substitution. In fact, we
believe that their proof of soundness is wrong: it is not hard to
see that their semantics is not stable under the second of these
two reductions: take $M$ to be $u$, and let the semantic
environment map $u$ to an application $PQ$, and then notice that
this leads to $\sem{\textsf{true}}{} = \sem{\textsf{false}}{}$. We
can also see this in the fact that their notion of reduction is
\emph{not confluent}. Here is the relevant counterexample: we can
reduce like this: \[
  \letbox{u}{\ibox{(PQ)}}{\textsf{is-app}\ (\ibox{u})}
    \red{}
  \textsf{is-app}\ (\ibox{PQ})
    \red{}
  \textsf{true}
\] But we could have also reduced like that: \[
  \letbox{u}{\ibox{(PQ)}}{\textsf{is-app}\ (\ibox{u})}
    \red{}
  \letbox{u}{\ibox{(PQ)}}{\textsf{false}}
    \red{}
  \textsf{false}
\] This example is easy to find if one tries to plough through a
proof of confluence: it is very clearly \emph{not} the case that
$M \red{} N$ implies $M[P/u] \red{} N[P/u]$ if $u$ is under a
$\ibox{(-)}$, exactly because of the presence of intensional
operations such as $\textsf{is-app}$.

Perhaps the following idea is more workable: let us limit
intensional operations to a chosen set of functions $f :
\mathcal{T}(A) \rightarrow \mathcal{T}(B)$ from terms of type $A$
to terms of type $B$, and then represent them in the language by a
constant $\tilde f$, such that $\tilde f(\ibox{M}) \red{}
\ibox{f(M)}$. This set of functions would then be chosen so that
they satisfy some sanity conditions. Since we want to have a
\textsf{let} construct that allows us to substitute code for modal
variables, the following general situation will occur: if $N
\red{} N'$, we have
  \[
    \letbox{u}{\ibox{M}}{N}
      \red{}
    N[M/u]
  \] but also \[
    \letbox{u}{\ibox{M}}{N}
      \red{}
    \letbox{u}{\ibox{M}}{N'}
      \red{}
    N'[M/u]
  \]

\noindent Thus, in order to have confluence, we need $N[M/u]
\red{} N'[M/u]$. This will only be the case for reductions of the
form $\tilde f(\ibox{M}) \rightarrow \ibox{f(M)}$ if \[
  f(N[M/u]) \equiv f(N)[M/u]
\] i.e. if $f$ is \emph{substitutive}. But then a simple
naturality argument gives that $f(N) \equiv f(u[N/u]) \equiv
f(u)[N/u]$, and hence $\tilde f$ is already definable by \[
  \lambda x : \Box A.\ \letbox{u}{x}{\ibox{f(u)}}
\] so such a `substitutive' function is not intensional after all.

In fact, the only truly intensional operations we can add to our
calculus will be those acting on \emph{closed} terms. We will see
that this circumvents the problems that arise when intensionality
interacts with substitution. Hence, we will limit intensional
operations to the following set:

\begin{defn}[Intensional operations]
  Let $\mathcal{T}_0(A)$ be the set of ($\alpha$-equivalence
  classes of) closed terms $M$ such that $\ctxt{\cdot}{\cdot}
  \vdash M : A$.  Then, the set of \emph{intensional operations},
  $\mathcal{F}(A, B)$, is defined to be the set of all functions
  $f : \mathcal{T}_0(A) \rightarrow \mathcal{T}_0(B)$.
\end{defn}
We will include all of these intensional operations $f :
\mathcal{T}_0(A) \rightarrow \mathcal{T}_0(B)$ in our calculus as
constants: \[
  \begin{prooftree}
    \justifies
      \ctxt{\Delta}{\Gamma} \vdash \tilde f : \Box A \rightarrow \Box B
  \end{prooftree}
\] with reduction rule $\tilde f(\ibox{M}) \rightarrow
\ibox{f(M)}$, under the proviso that $M$ is closed.  Of course,
these also includes operations on terms that \emph{might not be
computable}. However, we are interested in proving consistency of
intensional operations in the most general setting. The questions
of which intensional operations are computable, and which
primitives or mechanisms can and should be used to express them,
are beyond the scope of this paper, and largely still open.

\subsection{Reduction and Confluence}

\begin{figure}
  \caption{Reduction for Intensional PCF}
  \label{fig:ipcfbeta}
  \begin{framed}
    \renewcommand{\arraystretch}{4}

\begin{center}
\begin{tabular}{c c}
  $
    \begin{prooftree}
        \justifies
      (\lambda x : A.\ M)N \red{} M[N/x]
        \using
      {(\red{}\beta)}
    \end{prooftree}
  $

  &

  $
    \begin{prooftree}
      M \red{} N
        \justifies
    \lambda x : A.\ M \red{} \lambda x : A.\ N
        \using
      {(\textsf{cong}_\lambda)}
    \end{prooftree}
  $

  \\


  $
    \begin{prooftree}
      M \red{} N
        \justifies
      MP \red{} NP
        \using
      {(\textsf{app}_1)}
    \end{prooftree}
  $

  &

  $
    \begin{prooftree}
      P \red{} Q
        \justifies
      MP \red{} MQ
        \using
      {(\textsf{app}_2)}
    \end{prooftree}
  $

  \\

  \multicolumn{2}{c}{
    $
      \begin{prooftree}
          \justifies
        \letbox{u}{\ibox{M}}{N} \red{} N[M/u]
          \using
        {(\Box\beta)}
      \end{prooftree}
    $
  }

  \\

  \multicolumn{2}{c}{
    $
      \begin{prooftree}
          \justifies
        \fixlob{z}{M} \red{} M[\ibox{(\fixlob{z}{M})}/z]
          \using
        {(\Box\textsf{fix})}
      \end{prooftree}
    $
  }
  
  \\

  \multicolumn{2}{c}{
    $
      \begin{prooftree}
        \text{$M$ closed, $M \in \textrm{dom}(f)$}
          \justifies
        \tilde f(\ibox{M}) \red{} \ibox{f(M)}
          \using
        {(\Box\textsf{int})}
      \end{prooftree}
    $
  }

  \\


  \multicolumn{2}{c}{
    $
      \begin{prooftree}
        M \red{} N
          \justifies
        \letbox{u}{M}{P} \red{} \letbox{u}{N}{P}
          \using
        {(\textsf{let-cong}_1)}
      \end{prooftree}
    $
  }

  \\

  \multicolumn{2}{c}{
    $
      \begin{prooftree}
        P \red{} Q
          \justifies
        \letbox{u}{M}{P} \red{} \letbox{u}{M}{Q}
          \using
        {(\textsf{let-cong}_2)}
      \end{prooftree}
    $
  }

  \\

  $ 
    \begin{prooftree}
        \justifies
      \textsf{zero?}\ \widehat{0} \red{} \textsf{true}
        \using
      {(\textsf{zero?}_1)}
    \end{prooftree}
  $

  & 

  $
    \begin{prooftree}
        \justifies
      \textsf{zero?}\ \widehat{n+1} \red{} \textsf{false}
        \using
      {(\textsf{zero?}_2)}
    \end{prooftree}
  $

  \\

  $
    \begin{prooftree}
        \justifies
      \textsf{succ}\ \widehat{n} \red{} \widehat{n+1}
        \using
      {(\textsf{succ})}
    \end{prooftree}
  $

  & 

  $
    \begin{prooftree}
        \justifies
      \textsf{pred}\ \widehat{n} \red{} \widehat{n\dotdiv 1}
        \using
      {(\textsf{pred})}
    \end{prooftree}
  $

  \\

  $
    \begin{prooftree}
        \justifies
      \supset_G\ \textsf{true}\ M\ N \red{} M
        \using
      {(\supset_1)}
    \end{prooftree}
  $

  & 

  $
    \begin{prooftree}
        \justifies
      \supset_G\ \textsf{false}\ M\ N \red{} N
        \using
      {(\supset_2)}
    \end{prooftree}
  $
\end{tabular}
\end{center}

  \end{framed}
\end{figure}

\opt{th}{
\begin{figure}
  \caption{Equational Theory for Intensional PCF}
  \label{fig:ipcfeq}
  \renewcommand{\arraystretch}{4}

\begin{tabular}{c c}

%
%
%

  \textbf{Function Spaces} & \\

  \multicolumn{2}{c}{
    $
      \begin{prooftree}
        \ctxt{\Delta}{\Gamma} \vdash N : A
          \qquad
        \ctxt{\Delta}{\Gamma, x{:}A, \Gamma'} \vdash M : B
          \justifies
        \ctxt{\Delta}{\Gamma} \vdash (\lambda x{:}A. M)\,N = M[N/x] : B
          \using
        {(\rightarrow\beta)}
      \end{prooftree}
    $
  }

  \\




  \textbf{Modality} & \\

  \multicolumn{2}{c}{
    $
      \begin{prooftree}
        \ctxt{\Delta}{\cdot} \vdash M : A
          \qquad
        \ctxt{\Delta, u : A}{\Gamma} \vdash N : C
          \justifies
        \ctxt{\Delta}{\Gamma}
            \vdash \letbox{u}{\ibox{M}}{N} = N[M/x] : C
          \using
        {(\Box\beta)}
      \end{prooftree}
    $
  }
    
  \\

%

  \multicolumn{2}{c}{
    $
      \begin{prooftree}
        \ctxt{\Delta}{z : \Box A} \vdash M : A
          \justifies
        \ctxt{\Delta}{\Gamma} \vdash 
            \fixlob{z}{M} = M[\ibox{(\fixlob{z}{M})}/z] : A
          \using
          {(\Box\textsf{fix})}
      \end{prooftree}
    $
  }

  \\

  \multicolumn{2}{c}{
    $
      \begin{prooftree}
        \ctxt{\cdot}{\cdot} \vdash M : A
          \quad
        f \in \mathcal{F}(A, B)
          \justifies
        \ctxt{\Delta}{\Gamma} \vdash 
            \tilde f (\ibox{M}) = \ibox{f(M)} : \Box B
          \using
        {(\Box\textsf{int})}
      \end{prooftree}
    $
  }

  \\

  \multicolumn{2}{c}{
    $
      \begin{prooftree}
        \ctxt{\Delta}{\Gamma} \vdash M = N : \Box A
          \qquad
        \ctxt{\Delta}{\Gamma} \vdash P = Q : C
          \justifies
        \ctxt{\Delta}{\Gamma} \vdash
            \letbox{u}{M}{P} = \letbox{u}{N}{Q} : B
          \using
        {(\Box\textsf{let-cong})}
      \end{prooftree}
    $
  }

  \\

  \multicolumn{2}{c}{
    \begin{minipage}{\textwidth}
      \textbf{Remark}. In addition to the above, one should also
      include (a) rules that ensure that equality is an equivalence
      relation, (b) congruence rules for $\lambda$-abstraction and
      application, and (c) rules corresponding to the behaviour of
      constants, as in Figure \ref{fig:ipcfbeta}. 
    \end{minipage}
  }

\end{tabular}

\end{figure}
}

We introduce a notion of reduction for iPCF, which we present in
Figure \ref{fig:ipcfbeta}. Unlike many studies of PCF-inspired
languages, we do not consider a reduction strategy but ordinary
`non-deterministic' $\beta$-reduction. We do so because are trying
to show consistency of the induced equational theory.

The equational theory induced by this notion of reduction is a
symmetric version of it, annotated with types. It is easy to write
down, so we omit it. Note the fact that, like the calculus of
Davies and Pfenning, we do \emph{not} include the following
congruence rule for the modality:
\[
  \begin{prooftree}
      \ctxt{\Delta}{\cdot} \vdash M = N : A
    \justifies
      \ctxt{\Delta}{\Gamma} \vdash \ibox{M} = \ibox{N} : \Box A
    \using
      {(\Box\textsf{cong})}
  \end{prooftree}
\] In fact, the very absence of this rule is what will allow modal
types to become intensional. Otherwise, the only new rules are
intensional recursion, embodied by the rule $(\Box \mathsf{fix})$,
and intensional operations, exemplified by the rule
$(\Box\mathsf{int})$.

We note that it seems perfectly reasonable to think that we should
allow reductions under $\textsf{fix}$, i.e. admit the rule \[
  \begin{prooftree}
      M \red{} N
    \justifies
      \fixlob{z}{M} \red{} \fixlob{z}{N}
  \end{prooftree}
\] as $M$ and $N$ are expected to be of type $A$, which need not
be modal. However, the reduction $\fixlob{z}{M} \red{}
M[\ibox{(\fixlob{z}{M})}/z]$ `freezes' $M$ under an occurrence of
$\ibox{(-)}$, so that no further reductions can take place within
it. Thus, the above rule would violate the intensional nature of
boxes. We were likewise compelled to define $\ufv{\fixlob{z}{M}}
\defeq \emptyset$ in the previous section: we should already
consider $M$ to be intensional, or under a box.

We can now show that

\begin{thm}
  \label{thm:conf}
  The reduction relation $\red{}$ is confluent.
\end{thm}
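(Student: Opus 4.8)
The plan is to prove confluence by the Tait--Martin-L\"of method of parallel reduction, in the streamlined form due to Takahashi. First I would introduce a relation $\redp$ of parallel reduction that contracts a set of redexes at once: it is generated by reflexivity on variables, numerals and constants, the evident congruence rules for application, $\lambda$-abstraction and $\letbox{u}{-}{-}$, and parallel analogues of the redex rules $(\red{}\beta)$, $(\Box\beta)$, $(\Box\textsf{fix})$, $(\Box\textsf{int})$ and the constant rules. The decisive design choice --- dictated by the absence of a congruence rule for the modality --- is that $\redp$ never descends under a $\ibox{(-)}$, nor under a $\fixbox{}{}$: the only clauses for these constructs are $\ibox{M} \redp \ibox{M}$, $\fixbox{z}{M} \redp \fixbox{z}{M}$, $\fixbox{z}{M} \redp \ibox{M[\fixbox{z}{M}/z]}$, and $\tilde f(\ibox{M}) \redp \ibox{f(M)}$ for closed $M$. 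One then verifies the sandwich $\red{} \subseteq \redp \subseteq \redt$, so that it suffices to establish the diamond property for $\redp$; note that, since untyped iPCF terms can loop, a Newman-style argument via termination is not available.

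The technical heart is a substitution lemma for $\redp$, split according to the kind of variable: (i) if $M \redp M'$ and $N \redp N'$ then $M[N/x] \redp M'[N'/x]$ for an ordinary variable $x$; and (ii) if $N \redp N'$ then $N[M/u] \redp N'[M/u]$ for a modal variable $u$ --- this one-sided form suffices, because in $(\Box\beta)$ and $(\Box\textsf{fix})$ the term substituted for a modal variable sits inside a $\ibox{(-)}$ or a $\fixbox{}{}$ and hence does not reduce. Both are proved by induction on the parallel-reduction derivation, and here the inertness of boxes works in our favour: in (ii) the feared case $N = \ibox{\cdots u \cdots}$ is trivial, since $N \redp N$ is the only option and $N[M/u] = N'[M/u]$ verbatim. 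Clause (i) is where the standing restriction of Section~\ref{sec:ipcfconfl} --- no $\lambda$-bound variable occurs free under a $\ibox{(-)}$ --- is used: it ensures that $x$ never occurs boxed in $M$, so substituting $N$ for $x$ cannot bury any redex of $N$ inside a frozen box. I would also check that this class of terms is stable under $\redp$: the only rule that relocates material under a box is $(\Box\beta)$, and it merely relocates the contents of an already-boxed subterm, creating no new $\lambda$-bound-under-box configuration (capture-avoidance of substitution is what makes this go through).

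With the substitution lemma available I would follow Takahashi and define, by recursion on terms, a complete development $M^{\ast}$ that contracts all currently visible redexes while leaving frozen material untouched --- so $(\ibox{M})^{\ast} = \ibox{M}$, $(\fixbox{z}{M})^{\ast} = \ibox{M[\fixbox{z}{M}/z]}$, $(\tilde f(\ibox{M}))^{\ast} = \ibox{f(M)}$ when $M$ is closed, $((\lambda x{:}A.\,M)N)^{\ast} = M^{\ast}[N^{\ast}/x]$, $(\letbox{u}{\ibox{M}}{N})^{\ast} = N^{\ast}[M/u]$, and $M^{\ast}$ descends structurally otherwise --- and then prove the triangle lemma: $M \redp N$ implies $N \redp M^{\ast}$, by induction on the structure of $M$. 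Each case is a short calculation invoking the substitution lemma; in fact the $(\Box\beta)$ and $(\Box\textsf{fix})$ cases are \emph{easier} than the corresponding $\lambda$-calculus case, precisely because the subterm that would cause trouble cannot be reduced. The diamond property of $\redp$ is then immediate --- if $M \redp N_1$ and $M \redp N_2$ then $N_1 \redp M^{\ast}$ and $N_2 \redp M^{\ast}$ --- and since the reflexive-transitive closure of $\redp$ coincides with $\redt$, confluence of $\red{}$ follows.

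I expect the main obstacle to lie not in any individual case but in the bookkeeping around the restriction on $\lambda$-bound variables and around which material is ``frozen''. One must be precise that a $(\Box\beta)$-step can \emph{unfreeze} a term --- move it out from under a $\ibox{(-)}$ --- so the set of active redexes does not simply shrink along a reduction; the statements of $M^{\ast}$ and of the triangle lemma have to be arranged so that a redex created by such an unfreezing is not required to be developed by $M^{\ast}$. Getting these definitions to interlock correctly, and confirming that the class of admissible terms really is closed under $\redp$, is the delicate part; once it is in place, confluence drops out of Takahashi's argument.
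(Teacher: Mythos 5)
Your proposal is correct and follows essentially the same route as the paper's appendix proof: Tait--Martin-L\"of parallel reduction with reflexivity compensating for the absence of a congruence rule under $\ibox{(-)}$, a substitution lemma split according to the kind of variable (with the no-boxed-occurrence restriction and its stability under $\redp$, which the paper records as BFV antimonotonicity), and Takahashi's complete development closing the diamond via the triangle lemma. The only differences are cosmetic --- you package the paper's three substitution lemmas as two, and run the triangle lemma by induction on the term rather than on the derivation of $M \redp N$.
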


The easiest route to that theorem is to use a proof like that in
\cite{Kavvos2017b}, i.e. the method of \emph{parallel reduction}.
This kind of proof was originally discovered by Tait and
Martin-L\"of, and is nicely documented in \cite{Takahashi1995}.
Because of the intensional nature of our $\ibox{(-)}$ constructs,
ours will be more nuanced and fiddly.  The proof can of course be
skipped on a first reading.

\paragraph{Proof of confluence}

We will use a variant of the proof in \cite{Kavvos2017b}, i.e.
the method of \emph{parallel reduction}. This kind of proof was
originally discovered by Tait and Martin-L\"of, and is nicely
documented in \cite{Takahashi1995}. Because of the intensional
nature of our $\ibox{(-)}$ constructs, ours will be more nuanced
and fiddly than any in \emph{op. cit.} The method is this: we will
introduce a second notion of reduction, \[
  \redp\ \subseteq \Lambda \times \Lambda
\] which we will `sandwich' between reduction proper and its
transitive closure: \[
  \red{}\ \subseteq\ \redp\ \subseteq\ \redt
\] We will then show that $\redp$ has the diamond property. By the
above inclusions, the transitive closure $\redp^\ast$ of $\redp$
is then equal to $\redt$, and hence $\red$ is Church-Rosser.

In fact, we will follow \cite{Takahashi1995} in doing something
better: we will define for each term $M$ its \emph{complete
development}, $M^\star$. The complete development is intuitively
defined by `unrolling' all the redexes of $M$ at once. We will
then show that if $M \redp N$, then $N \redp M^\star$. $M^\star$
will then suffice to close the diamond: \[
  \begin{tikzcd}
    & M
	\arrow[dr, Rightarrow]
	\arrow[dl, Rightarrow]
    &  \\
    P
      \arrow[dr, Rightarrow, dotted]
    & 
    & Q 
      \arrow[dl, Rightarrow, dotted] \\
    & M^\star
  \end{tikzcd}
\]

\begin{figure}
  \centering
  \caption{Parallel Reduction}
  \begin{framed}
    \renewcommand{\arraystretch}{3}

\begin{center}
\begin{tabular}{c c}


  $
    \begin{prooftree}
        \justifies
      M \redp{}M
        \using
      {(\textsf{refl})}
    \end{prooftree}
  $

  &

  $
    \begin{prooftree}
      M \redp{}N
        \quad\quad
      P \redp{}Q
        \justifies
      (\lambda x : A.\ M)P \redp{}N[Q/x]
        \using
      {(\rightarrow\beta)}
    \end{prooftree}
  $

  \\

  $
    \begin{prooftree}
      M \redp{}N
        \justifies
      \lambda x : A.\ M \redp{}\lambda x : A.\ N
        \using
      {(\textsf{cong}_\lambda)}
    \end{prooftree}
  $

  &

  $
    \begin{prooftree}
      M \redp{}N
        \quad\quad
      P \redp{}Q
        \justifies
      MP \redp{}NQ
        \using
      {(\textsf{app})}
    \end{prooftree}
  $

  \\

  $
    \begin{prooftree}
      P \redp{} P'
        \justifies
      \supset_G\ \textsf{true}\ P\ Q \redp{} P'
        \using
      {(\supset_1)}
    \end{prooftree}
  $

  & 

  $
    \begin{prooftree}
      Q \redp{} Q'
        \justifies
      \supset_G\ \textsf{false}\ P\ Q \redp{} Q'
        \using
      {(\supset_2)}
    \end{prooftree}
  $

  \\

  \multicolumn{2}{c}{
    $ \begin{prooftree}
      M \redp{} N
        \justifies
      \letbox{u}{\ibox{P}}{M} \redp{} N[P/u]
        \using
      {(\Box\beta)}
    \end{prooftree} $
  }

  \\

  \multicolumn{2}{c}{
    $
      \begin{prooftree}
	M \redp{} N
          \justifies
        \fixlob{z}{M} \redp{} N[\ibox{(\fixlob{z}{M})}/z]
          \using
        {(\Box\textsf{fix})}
      \end{prooftree}
    $
  }

  \\ 

  \multicolumn{2}{c}{
    $
      \begin{prooftree}
        \text{$M$ closed, $M \in \textrm{dom}(f)$}
          \justifies
        \tilde f(\ibox{M}) \redp{}\ibox{f(M)}	
          \using
        {(\Box\textsf{int})}
      \end{prooftree}
    $
  }

  \\

  \multicolumn{2}{c}{
    $
      \begin{prooftree}
        M \redp{}N
          \quad\quad
        P \redp{}Q
          \justifies
        \letbox{u}{M}{P} \redp{} \letbox{u}{N}{Q}
          \using
        {(\Box\textsf{let-cong})}
      \end{prooftree}
    $
  }

  \\

  \multicolumn{2}{c}{
    \begin{minipage}{\textwidth}
      \textbf{Remark}. In addition to the above, one should also
      include rules for the constants, but these are merely
      restatements of the rules in Figure \ref{fig:ipcfbeta}.
    \end{minipage}
  }

\end{tabular}
\end{center}

  \end{framed}
  \label{fig:parallel}
\end{figure}

The parallel reduction $\redp$ is defined in Figure
\ref{fig:parallel}. Instead of the axiom $(\textsf{refl})$ we
would more commonly have an axiom for variables, $x \redp{} x$,
and $M \redp{} M$ would be derivable. However, we do not have a
congruence rule neither for $\ibox{(-)}$ nor for L\"ob's rule, so
that possibility would be precluded. We are thus forced to include
$M \redp{} M$, which slightly complicates the lemmas that follow.

The main lemma that usually underpins the confluence proof is
this: if $M \redp{} N$ and $P \redp{} Q$, $M[P/x] \redp{} N[Q/x]$.
However, this is intuitively wrong: no reductions should happen
under boxes, so this should only hold if we are substituting for a
variable \emph{not} occurring under boxes.  Hence, this lemma
splits into three different ones:
\begin{itemize}
  \item
    $P \redp{} Q$ implies $M[P/x] \redp{} M[Q/x]$, if $x$ does not
    occur under boxes: this is the price to pay for replacing the
    variable axiom with (\textsf{refl}).
  \item
    $M \redp{} N$ implies $M[P/u] \redp{} N[P/u]$, even if $u$ is
    under a box.
  \item
    If $x$ does not occur under boxes, $M \redp{} N$ and $P
    \redp{} Q$ indeed imply $M[P/x] \redp{} N[Q/x]$
\end{itemize} 
\begin{lem}
  \label{lem:substint}
  If $M \redp{} N$ then $M[P/u] \redp{} N[P/u]$.
\end{lem}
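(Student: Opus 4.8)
The plan is a routine induction on the derivation of $M \redp N$. In each case the goal is to reach $N[P/u]$ from $M[P/u]$ using the \emph{same} rule, after commuting the outer substitution $[P/u]$ past whatever inner substitution (from a $\beta$-step or a fixpoint unrolling) the rule introduces; the only extra tool needed is the standard substitution lemma $(M_1[M_2/x])[P/u] \equiv (M_1[P/u])[(M_2[P/u])/x]$, valid whenever $x \not\equiv u$ and $x \notin \fv{P}$, a side condition we can always arrange by $\alpha$-renaming bound variables away from $u$ and from $\fv{P}$. Since the substituted term $P$ is held fixed --- only $M$ reduces --- the argument never demands a reduction underneath a $\ibox{(-)}$, which is precisely why it goes through even when $u$ occurs boxed (in contrast with the more delicate statement needed for substituting a variable that does not occur under boxes).

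The congruence-style cases $(\textsf{refl})$, $(\textsf{cong}_\lambda)$, $(\textsf{app})$, $(\Box\textsf{let-cong})$, $(\supset_1)$, $(\supset_2)$, and the rules for the arithmetic constants follow immediately by applying the induction hypothesis to the premises and reassembling --- for the constant rules there is nothing to do, as redex and contractum are closed. For $(\rightarrow\beta)$, where $(\lambda x{:}A.\ M_1)M_2 \redp N_1[N_2/x]$ from $M_1 \redp N_1$ and $M_2 \redp N_2$: the induction hypothesis yields $M_1[P/u] \redp N_1[P/u]$ and $M_2[P/u] \redp N_2[P/u]$, so $(\lambda x{:}A.\ M_1[P/u])(M_2[P/u]) \redp (N_1[P/u])[(N_2[P/u])/x]$, and the substitution lemma identifies the right-hand side with $(N_1[N_2/x])[P/u]$. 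The case $(\Box\beta)$ is the same after renaming the let-bound variable to $v$ and writing $M_0$ for the boxed body, which carries no reduction: $(\letbox{v}{\ibox{M_0}}{M_1})[P/u] \equiv \letbox{v}{\ibox{M_0[P/u]}}{M_1[P/u]}$ steps by $(\Box\beta)$ to $(N_1[P/u])[(M_0[P/u])/v]$, equal to $(N_1[M_0/v])[P/u]$ by the substitution lemma.

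For the axiom $(\Box\textsf{fix})$, $\fixbox{z}{M_0} \redp \ibox{M_0[\fixbox{z}{M_0}/z]}$: applying $[P/u]$ with $z \not\equiv u$ and $z \notin \fv{P}$, the left-hand side becomes $\fixbox{z}{M_0[P/u]}$, which steps by $(\Box\textsf{fix})$ to $\ibox{(M_0[P/u])[\fixbox{z}{M_0[P/u]}/z]}$; pushing $[P/u]$ through the unrolling via the substitution lemma shows this is exactly $(\ibox{M_0[\fixbox{z}{M_0}/z]})[P/u]$. Finally, $(\Box\textsf{int})$ is the one case that genuinely uses the restriction of intensional operations to closed arguments: in $\tilde f(\ibox{M_0}) \redp \ibox{f(M_0)}$ the term $M_0$ is closed, hence so is $f(M_0) \in \mathcal{T}(B)$, and as $\tilde f$ is a constant both $\tilde f(\ibox{M_0})$ and $\ibox{f(M_0)}$ are untouched by $[P/u]$, so the very same instance of $(\Box\textsf{int})$ applies. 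I do not expect a real obstacle here; the only care needed is the $\alpha$-renaming bookkeeping that makes the substitution lemma applicable in the $(\rightarrow\beta)$, $(\Box\beta)$, and $(\Box\textsf{fix})$ cases.
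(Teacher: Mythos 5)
Your proposal is correct and follows essentially the same route as the paper's own proof: induction on the derivation of $M \redp N$, the standard substitution lemma to commute $[P/u]$ past the inner substitutions in the $(\rightarrow\beta)$, $(\Box\beta)$, and $(\Box\textsf{fix})$ cases, and closedness of the argument of $\tilde f$ to dispose of $(\Box\textsf{int})$. Your treatment of $(\Box\textsf{fix})$ is in fact spelled out in more detail than the paper's (which merely says ``a similar application of the substitution lemma''), but the argument is the same.
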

\begin{proof}
  By induction on the generation of $M \redp{} N$. Most cases
  trivially follow, or consist of simple invocations of the IH. In
  the case of $(\rightarrow\beta)$, the known substitution lemma
  suffices. Let us look at the cases involving boxes.
  \begin{indproof}
    \indcase{$\Box\beta$}
      Then $M \redp{} N$ is $\letbox{v}{\ibox{R}}{S} \redp{}
      S'[R/v]$ with $S \redp{} S'$. By the IH, we have that
      $S[P/u] \redp{} S'[P/u]$, so \[
	\letbox{v}{\ibox{R[P/u]}}{S[P/u]} \redp S'[P/u][R[P/u]/v]
      \] and this last is $\alpha$-equivalent to $S'[R/v][P/u]$ by
      the substitution lemma.
    \indcase{$\Box\textsf{fix}$}
      A similar application of the substitution lemma.
    \indcase{$\Box\textsf{int}$}
      Then $M \redp{} N$ is $\tilde f(\ibox{Q}) \redp{}
      \ibox{f(Q)}$. Hence \[
        \left(\tilde f(\ibox{Q})\right)[P/u]
          \equiv
        \tilde f(\ibox{Q})
          \redp{}
        \ibox{f(Q)}
          \equiv
        \left(\ibox{f(Q)}\right)[P/u]
    \] simply because both $Q$ and $f(Q)$ are closed.
  \end{indproof}
\end{proof}

\begin{lem} 
  \label{lem:substredp}
  If $P \redp Q$ and $x \not\in \bfv{M}$, then $M[P/x] \redp
  M[Q/x]$.
\end{lem}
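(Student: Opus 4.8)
The plan is to proceed by induction on the structure of $M$, using throughout the convention that bound variables are chosen distinct from $x$ and fresh for both $P$ and $Q$. The guiding idea is that the hypothesis $x \notin \bfv{M}$ says precisely that $x$ has no free occurrence beneath a $\ibox{(-)}$ or a $\fixbox{z}{(-)}$ in $M$, so a substitution for $x$ never needs to cross one of those binders; wherever it would, the substitution is vacuous and the result follows from $(\textsf{refl})$ alone.

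I would dispatch the base and ``guarded'' cases first. If $M \equiv x$, the goal is exactly the hypothesis $P \redp Q$. If $M$ is a constant or a variable other than $x$, then $M[P/x] \equiv M \equiv M[Q/x]$ and $(\textsf{refl})$ applies. For $M \equiv \ibox{N}$, the definition gives $\bfv{M} = \fv{N}$, so $x \notin \fv{N}$ and $M[P/x] \equiv \ibox{N} \equiv M[Q/x]$; similarly for $M \equiv \fixbox{z}{N}$, where $\bfv{M} = \fv{N} - \{z\}$ forces either $x \equiv z$ or $x \notin \fv{N}$, so that the substitution is again vacuous. All of these close with $(\textsf{refl})$.

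The congruence cases then follow from the induction hypothesis. For $M \equiv \lambda y{:}A.\ N$ we have $\bfv{M} = \bfv{N} - \{y\}$, hence $x \notin \bfv{N}$, and $(\textsf{cong}_\lambda)$ applied to the IH gives the result; for $M \equiv N_1 N_2$ we have $\bfv{M} = \bfv{N_1} \cup \bfv{N_2}$, so the IH applies to both subterms and we conclude by $(\textsf{app})$; for $M \equiv \letbox{u}{N_1}{N_2}$ we have $\bfv{M} = \bfv{N_1} \cup (\bfv{N_2} - \{u\})$, so $x \notin \bfv{N_1}$ and $x \notin \bfv{N_2}$, and two uses of the IH together with $(\Box\textsf{let-cong})$ finish.

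I do not expect a serious obstacle. The one point that needs care is the reading of the definition of $\bfv{-}$ in the $\ibox{(-)}$ and $\fixbox{z}{(-)}$ cases: one must notice that here $\bfv{M}$ already contains \emph{all} the relevant free variables of the guarded body, so $x \notin \bfv{M}$ genuinely rules out any occurrence of $x$ there --- in contrast with Lemma \ref{lem:substint}, where substitution for a \emph{modal} variable is allowed to pass under a box. The usual freshness assumptions on the bound variables $y$ and $u$ are also needed for the congruence steps.
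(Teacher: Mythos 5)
Your proof is correct and follows the same route as the paper: induction on the structure of $M$, with the variable case split on whether the variable is $x$, the $\ibox{(-)}$ and $\fixbox{z}{(-)}$ cases closed by observing the substitution is vacuous (so $(\textsf{refl})$ applies), and the remaining cases handled by the congruence rules. The paper merely states the congruence cases are trivial where you spell them out; there is no substantive difference.
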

\begin{proof}
  By induction on the term $M$. The only non-trivial cases are
  those for $M$ a variable, $\ibox{M'}$ or $\fixlob{z}{M'}$. In
  the first case, depending on which variable $M$ is, use either
  $(\textsf{refl})$, or the assumption $P \redp Q$. In the latter
  two, $(\ibox{M'})[P/x] \equiv \ibox{M'} \equiv (\ibox{M'})[Q/x]$
  as $x$ does not occur under a box, so use $(\textsf{refl})$, and
  similarly for $\fixlob{z}{M'}$.
\end{proof}

\begin{lem} 
  \label{lem:redp}
  If $M \redp N$, $P \redp Q$, and $x \not\in \bfv{M}$, then \[
    M[P/x] \redp N[Q/x]
  \]
\end{lem}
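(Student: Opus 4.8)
The plan is to induct on the derivation of $M \redp N$. The base case is the axiom $(\textsf{refl})$, where $N \equiv M$: here there is no induction hypothesis to invoke, and instead the claim $M[P/x] \redp M[Q/x]$ is precisely Lemma~\ref{lem:substredp}, whose hypothesis $x \notin \bfv{M}$ we are handed. This is exactly the reason the side condition $x \notin \bfv{M}$ has to appear: replacing the usual variable axiom by $(\textsf{refl})$ trades a trivial base case for one that needs Lemma~\ref{lem:substredp}.

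For the congruence rules $(\textsf{cong}_\lambda)$, $(\textsf{app})$, $(\Box\textsf{let-cong})$, $(\supset_1)$, $(\supset_2)$ and the rules for constants, the argument is uniform. From $x \notin \bfv{M}$ and the clauses defining $\bfv{-}$ (renaming bound variables apart from $x$ where needed), one reads off that $x$ avoids $\bfv{-}$ of each immediate subterm occurring in a premise; one applies the induction hypothesis to each premise together with $P \redp Q$; and one reassembles with the same rule, using that $[P/x]$ commutes with the term former. The $(\rightarrow\beta)$ and $(\Box\beta)$ cases follow the same template with one extra move: after applying the rule one rewrites a nested substitution — e.g. turning $(N_0[Q/x])[Q_0[Q/x]/y]$ back into $(N_0[Q_0/y])[Q/x]$ — which is the standard substitution lemma, available once the bound variable ($y$, resp. $u$) is $\alpha$-renamed out of $\fv{Q}$.

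The delicate cases are the three where a box sits in the redex or on the right of the reduct: $(\Box\beta)$, $(\Box\textsf{fix})$, $(\Box\textsf{int})$. The key point is that $x \notin \bfv{M}$ is much stronger there than it first appears. For $M \equiv \letbox{u}{\ibox{R}}{S}$ it forces $x \notin \fv{R}$, so $R[P/x] \equiv R \equiv R[Q/x]$ and only the tail $S$ is affected — and the induction hypothesis applies to $S$ since also $x \notin \bfv{S}$. For $M \equiv \fixbox{z}{M_0}$ it forces $x \notin \fv{M_0}$, whence both $M[P/x] \equiv M$ and $N[Q/x] \equiv N$, so the goal is literally another instance of the $(\Box\textsf{fix})$ axiom. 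For $M \equiv \tilde f(\ibox{M_0})$ the subterm $M_0$ is closed, so both substitutions act trivially and we re-apply $(\Box\textsf{int})$. In effect, the hypothesis exactly guarantees that substitution never has to descend into a box, which is consistent with the fact that $\redp$ is never allowed to fire under a $\ibox{-}$.

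I expect the only real obstacle to be the bookkeeping in these box cases — convincing oneself that $x \notin \bfv{M}$ genuinely drives $x$ out of every boxed subterm, so that the nontrivial substitution takes place only outside boxes — together with the usual discipline of choosing bound-variable names apart from $\fv{P}$ and $\fv{Q}$ so that the substitution lemma applies on the nose. Everything else is a mechanical appeal to the induction hypothesis.
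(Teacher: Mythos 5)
Your proposal is correct and follows essentially the same route as the paper's proof: induction on the derivation of $M \redp N$, discharging the $(\textsf{refl})$ case via Lemma~\ref{lem:substredp}, handling $(\rightarrow\beta)$ and $(\Box\beta)$ with the induction hypothesis plus the substitution lemma, and observing that $x \not\in \bfv{M}$ forces the substitution to act trivially on every boxed subterm in the $(\Box\beta)$, $(\Box\textsf{fix})$ and $(\Box\textsf{int})$ cases. Nothing essential is missing.
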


\begin{proof}
  By induction on the generation of $M \redp N$. The cases for
  most congruence rules and constants follow trivially, or from
  the IH. We prove the rest.
  \begin{indproof}
    \indcase{$\textsf{refl}$}
      Then $M \redp N$ is actually $M \redp M$, so we use Lemma
      \ref{lem:substredp} to infer $M[P/x] \redp M[Q/x]$. 

    \indcase{$\Box\textsf{int}$}
      Then $M \redp N$ is actually $\tilde f(\ibox{M}) \redp
      \ibox{f(M)}$. But $M$ and $f(M)$ are closed, so
      $\left(\tilde f(\ibox{M})\right)[P/x] \equiv \tilde
      f(\ibox{M}) \redp \ibox{f(M)} \equiv
      \left(\ibox{f(M)}\right)[Q/x]$.

    \indcase{$\supset_i$}
      Then $M \redp N$ is $\supset_G\ \textsf{true}\ M\ N \redp
      M'$ with $M \redp M'$. By the IH, $M[P/x] \redp M'[Q/x]$, so
      \[
	\supset_G\ \textsf{true}\ M[P/x]\ N[P/x]
	  \redp
	M'[Q/x]
      \] by a single use of $(\supset_1)$. The case for
      $\textsf{false}$ is similar.

    \indcase{$\rightarrow \beta$}
      Then $(\lambda x'{:}A.\ M)N \redp N'[M'/x']$, where $M \redp
      M'$ and $N \redp N'$. Then \[
	\left((\lambda x'{:}A.\ M)N\right)[P/x]
	  \equiv
	(\lambda x'{:}A.\ M[P/x])(N[P/x])
      \] But, by the IH, $M[P/x] \redp M'[Q/x]$ and $N[P/x] \redp
      N'[Q/x]$. So by $(\rightarrow \beta)$ we have \[
        (\lambda x'{:}A.\ M[P/x])(N[P/x])
          \redp
        M'[Q/x]\left[N'[Q/x]/x'\right]
      \] But this last is $\alpha$-equivalent to
      $\left(M'[N'/x']\right)\left[Q/x\right]$ by the substitution
      lemma.

    \indcase{$\Box\beta$}
      Then $\letbox{u'}{\ibox{M}}{N} \redp N'[M/u']$ where $N
      \redp N'$. By assumption, we have that $x \not\in \fv{M}$
      and $x \not\in \bfv{N}$. Hence, we have by the IH that
      $N[P/x] \redp N'[Q/x]$, so by applying $(\Box\beta)$ we get
      \begin{align*}
	(\letbox{u'}{\ibox{M}}{N})[P/x]
	  \equiv\
	    &\letbox{u'}{\ibox{M[P/x]}}{N[P/x]} \\
	  \equiv\
	    &\letbox{u'}{\ibox{M}}{N[P/x]} \\
	  \redp\
	    &N'[Q/x][M/u']
      \end{align*} But this last is $\alpha$-equivalent to
      $N'[M/u'][Q/x]$, by the substitution lemma and the fact that
      $x$ does not occur in $M$.

    \indcase{$\Box\textsf{fix}$}
      Then $\fixlob{z}{M} \redp{} M'[\ibox{(\fixlob{z}{M})}/z]$,
      with $M \redp{} M'$. As $x \not\in \bfv{\fixlob{z}{M}}$, we
      have that $x \not\in \fv{M}$, and by Lemma
      \ref{lem:varmon}, $x \not\in \fv{M'}$ either, so \[
        (\fixlob{z}{M})[P/x] \equiv \fixlob{z}{M}
      \] and\[
        M'[\fixlob{z}{M}/z][Q/x]
          \equiv
        M'[Q/x][\fixlob{z}{M[Q/x]}/z]
	  \equiv
	M'[\fixlob{z}{M}/z]
      \] Thus, a single use of $(\Box\textsf{fix})$ suffices.
  \end{indproof}
\end{proof}

We now pull the following definition out of the hat:
\begin{defn}[Complete development]
  The \emph{complete development} $M^\star$ of a term $M$ is
  defined by the following clauses:
    \begin{align*}
      x^\star
        &\defeq x \\
      c^\star
        &\defeq c \qquad (c \in 
	  \{\tilde f, \widehat{n},
	    \textsf{zero?}, \dots \}) \\
      \left(\lambda x{:}A.\ M\right)^\star
        &\defeq \lambda x{:}A.\ M^\star \\
      \left(\tilde{f}(\ibox{M})\right)^\star
	&\defeq \ibox{f(M)} \qquad \text{if $M$ is closed} \\
      \left(\left(\lambda x{:}A.\ M\right)N\right)^\star
        &\defeq M^\star[N^\star/x] \\
      \left(\supset_G\ \textsf{true}\ M\ N\right)^\star
        &\defeq M^\star \\
      \left(\supset_G\ \textsf{false}\ M\ N\right)^\star
        &\defeq N^\star \\
      \left(MN\right)^\star
        &\defeq M^\star N^\star \\
      \left(\ibox{M}\right)^\star
        &\defeq \ibox{M} \\
      \left(\letbox{u}{\ibox{M}}{N}\right)^\star
        &\defeq N^\star[M/u] \\
      \left(\letbox{u}{M}{N}\right)^\star
        &\defeq \letbox{u}{M^\star}{N^\star} \\
      \left(\fixlob{z}{M}\right)^\star
        &\defeq M^\star[\ibox{(\fixlob{z}{M})}/z]
    \end{align*}
\end{defn}

\noindent We need the following two technical results as well.

\begin{lem}
  \label{lem:reflstar}
  $M \redp{} M^\star$
\end{lem}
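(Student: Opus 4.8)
The plan is a straightforward structural induction on $M$, with the case split mirroring exactly the defining clauses of $M^\star$ read with the usual priority, so that the more specific patterns --- $\beta$-redexes $(\lambda x{:}A.\ N)P$, $\Box\beta$-redexes $\letbox{u}{\ibox{N}}{P}$, $\supset_G$ applied to a boolean constant, and $\tilde f$ applied to a closed box --- take precedence over the generic application and \textsf{let} clauses. In each case the inductive hypotheses on the immediate subterms supply precisely the premises of the matching rule of Figure \ref{fig:parallel}, and no appeal to a substitution lemma is needed, because the parallel-reduction rules that involve substitution (namely $(\rightarrow\beta)$ and $(\Box\beta)$) already build the substitution into their conclusion.

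Concretely: when $M$ is a variable, a constant, or of the form $\ibox{N}$, we have $M^\star \equiv M$ and the axiom $(\textsf{refl})$ closes the case. When $M \equiv \lambda x{:}A.\ N$ the IH gives $N \redp N^\star$ and $(\textsf{cong}_\lambda)$ yields $M \redp \lambda x{:}A.\ N^\star \equiv M^\star$; the non-redex application $M \equiv NP$ is handled identically via $(\textsf{app})$, and the non-redex $M \equiv \letbox{u}{N}{P}$ (with $N$ not literally a box) via $(\Box\textsf{let-cong})$. For the redex cases: if $M \equiv (\lambda x{:}A.\ N)P$, the IH gives $N \redp N^\star$ and $P \redp P^\star$ and the parallel rule $(\rightarrow\beta)$ delivers $M \redp N^\star[P^\star/x] \equiv M^\star$ in one step; if $M \equiv \letbox{u}{\ibox{N}}{P}$, the IH on $P$ together with $(\Box\beta)$ gives $M \redp P^\star[N/u] \equiv M^\star$ (the subterm $N$ is carried along unreduced, matching the definition of $M^\star$); if $M \equiv \supset_G\ \textsf{true}\ N\ P$ we use $(\supset_1)$ with the IH on $N$, and symmetrically $(\supset_2)$ for $\textsf{false}$; if $M \equiv \fixbox{z}{N}$ the axiom $(\Box\textsf{fix})$ applies directly; and if $M \equiv \tilde f(\ibox{N})$ with $N$ closed we use the axiom $(\Box\textsf{int})$, whose side condition is exactly the applicability condition of that clause of $M^\star$.

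Since every clause of the definition of $M^\star$ is matched by exactly one parallel-reduction rule whose premises are all discharged either by $(\textsf{refl})$ or by the induction hypothesis on proper subterms, the induction goes through without incident. The only point requiring attention --- and it is bookkeeping, not a genuine obstacle --- is that the case analysis must follow the priority ordering of the clauses of $M^\star$ verbatim, so that, for instance, $\tilde f(\ibox{N})$ with $N$ closed is treated by the $(\Box\textsf{int})$ clause rather than accidentally by the generic application clause. Note in particular that Lemmas \ref{lem:substint}--\ref{lem:redp} play no role here; they are needed only later, to establish that $M \redp N$ implies $N \redp M^\star$.
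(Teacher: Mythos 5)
Your proof is correct and follows essentially the same route as the paper's: a structural induction on $M$ in which each clause of the definition of $M^\star$ is discharged by the matching rule of Figure \ref{fig:parallel} --- $(\textsf{refl})$ for variables, constants and $\ibox{N}$, the congruence rules for the non-redex cases, and the axioms $(\Box\textsf{fix})$ and $(\Box\textsf{int})$ for the fixpoint and intensional-operation cases. The paper's proof is just a terser statement of exactly this argument, so nothing further is needed.
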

\begin{proof}
  By induction on the term $M$. Most cases follow immediately by
  (\textsf{refl}), or by the IH and an application of the relevant
  rule. The case for $\ibox{M}$ follows by $(\textsf{refl})$, the
  case for $\fixlob{z}{M}$ follows by $(\Box\textsf{fix})$, and
  the case for $\tilde f(\ibox{M})$ by $(\Box\textsf{int})$.
\end{proof}

\begin{lem}[BFV antimonotonicity]
  \label{lem:varmon}
  If $M \redp{} N$ then $\bfv{N} \subseteq \bfv{M}$.
\end{lem}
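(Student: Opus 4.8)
The plan is to argue by induction on the derivation of $M \redp N$, after first isolating the right substitution lemmas. Besides the familiar $\fv{M[N/x]} \subseteq \left(\fv{M}\setminus\{x\}\right)\cup\fv{N}$, I would prove two facts about $\bfv{\cdot}$ by a routine induction on the term $M$ (both bottoming out in the standard $\fv$ lemma at the $\ibox{(-)}$- and $\fixbox{z}{}$-nodes): (a) if $x \notin \bfv{M}$, then $\bfv{M[N/x]} \subseteq \bfv{M}\cup\bfv{N}$; and (b) unconditionally, $\bfv{M[N/x]} \subseteq \left(\bfv{M}\setminus\{x\}\right)\cup\fv{N}$. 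The content of (b) is that as soon as $x$ is substituted at a position underneath a $\ibox{(-)}$, \emph{all} of $\fv{N}$ — not just its boxed part — is promoted to being boxed; (a) records that if this never happens, the sharper bound $\bfv{N}$ suffices.

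With these in hand the induction is mostly bookkeeping. For $(\textsf{refl})$ the inclusion is an equality. For the congruences $(\textsf{cong}_\lambda)$, $(\textsf{app})$, $(\Box\textsf{let-cong})$ — and for $(\supset_1)$, $(\supset_2)$, which merely discard an argument — the claim follows from the inductive hypotheses and the defining clauses of $\bfv{\cdot}$. All the arithmetic and boolean constant rules have closed redex and reduct, so $\bfv{\cdot}=\emptyset$ on both sides; and for $(\Box\textsf{int})$, $M$ is closed and $f(M)\in\mathcal{T}(B)$ is closed too, so again both sides have empty $\bfv{\cdot}$. For $(\Box\beta)$, from $\letbox{u}{\ibox{P}}{M}\redp N[P/u]$ with $M\redp N$: lemma (b) gives $\bfv{N[P/u]}\subseteq\left(\bfv{N}\setminus\{u\}\right)\cup\fv{P}$, the IH gives $\bfv{N}\subseteq\bfv{M}$, and $\bfv{\letbox{u}{\ibox{P}}{M}}=\fv{P}\cup\left(\bfv{M}\setminus\{u\}\right)$, so we are done — here it is crucial that $P$ sits under a box in the redex, so its full $\fv{P}$ is already counted. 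For $(\Box\textsf{fix})$, $\bfv{\ibox{M[\fixbox{z}{M}/z]}}=\fv{M[\fixbox{z}{M}/z]}$, and by the $\fv$-substitution lemma together with $\fv{\fixbox{z}{M}}=\fv{M}\setminus\{z\}$ this is contained in $\fv{M}\setminus\{z\}=\bfv{\fixbox{z}{M}}$.

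The one case that genuinely uses the standing caveat — no $\lambda$-bound variable occurs under a $\ibox{(-)}$ — is $(\rightarrow\beta)$, and I expect it to be the crux. From $(\lambda x{:}A.\ M)P\redp N[Q/x]$ with $M\redp N$ and $P\redp Q$, the caveat applied to the redex gives $x\notin\bfv{M}$, hence $x\notin\bfv{N}$ by the IH, so lemma (a) applies and yields $\bfv{N[Q/x]}\subseteq\bfv{N}\cup\bfv{Q}\subseteq\bfv{M}\cup\bfv{P}=\bfv{(\lambda x{:}A.\ M)P}$, the last step because $x\notin\bfv{M}$. The point to flag is that without the caveat the statement is outright false: $(\lambda x.\ \ibox{x})\,y\red\ibox{y}$ takes $\bfv{\cdot}$ from $\emptyset$ up to $\{y\}$. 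So the proof has to be written so that this dependence on the caveat — and the exact form of sublemma (b), distinguishing $\fv{N}$ from $\bfv{N}$ — is made explicit; everything else is straightforward.
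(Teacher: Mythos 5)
Your proof is correct and takes the same route as the paper, whose entire proof is ``by induction on $M \redp N$''; you have simply supplied the details that the paper leaves implicit, and your two substitution sublemmas (a) and (b) are exactly the right auxiliary facts. Your identification of where the standing no-$\lambda$-variable-under-$\ibox{(-)}$ caveat is genuinely needed — the $(\rightarrow\beta)$ case, with the counterexample $(\lambda x.\ \ibox{x})\,y \red \ibox{y}$ — is accurate and worth making explicit, since the paper relies on it silently.
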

\begin{proof}
  By induction on $M \redp{} N$.
\end{proof}

\noindent And here is the main result:

\begin{thm}
  \label{thm:mainredp}
  If $M \redp P$, then $P \redp M^\star$.
\end{thm}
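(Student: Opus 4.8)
The plan is to induct on the derivation of $M \redp P$, casing on its final rule. The structural fact driving almost every case is that $\ibox{(-)}$ and $\fixbox{z}{(-)}$ block reduction: if $R_0 \redp S_0$ with $R_0 \equiv \ibox{R}$ then the only applicable rule is $(\textsf{refl})$, so $S_0 \equiv \ibox{R}$; likewise $\tilde f(\ibox{R})$ moves only by $(\textsf{refl})$ or $(\Box\textsf{int})$, and a constant moves only by $(\textsf{refl})$. I start with the immediate cases. If the last rule is $(\textsf{refl})$ then $P \equiv M$ and Lemma~\ref{lem:reflstar} finishes it. If it is $(\Box\textsf{fix})$ or $(\Box\textsf{int})$ then, inspecting the matching clause of $(-)^\star$, $P$ is already syntactically $M^\star$, so $(\textsf{refl})$ closes the case. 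The congruence rule $(\textsf{cong}_\lambda)$ is immediate from the inductive hypothesis.

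Next are the redex rules whose target coincides with the complete development once the immediate subterms are developed. For $(\supset_1)$, say $M \equiv {\supset_G}\ \textsf{true}\ M_1\ N_1 \redp M_1'$ with $M_1 \redp M_1'$; then $M^\star \equiv M_1^\star$ and the inductive hypothesis gives $M_1' \redp M_1^\star$. The cases $(\supset_2)$ and the constant rules for $\textsf{zero?}$, $\textsf{succ}$, $\textsf{pred}$ are identical in shape. For $(\rightarrow\beta)$, say $M \equiv (\lambda x{:}A.\ A_1)B_1 \redp A_1'[B_1'/x]$ with $A_1 \redp A_1'$ and $B_1 \redp B_1'$; then $M^\star \equiv A_1^\star[B_1^\star/x]$, the inductive hypothesis gives $A_1' \redp A_1^\star$ and $B_1' \redp B_1^\star$, and the standing ban on $\lambda$-bound variables under boxes together with Lemma~\ref{lem:varmon} yields $x \notin \bfv{A_1'}$, so Lemma~\ref{lem:redp} delivers $A_1'[B_1'/x] \redp A_1^\star[B_1^\star/x]$. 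For $(\Box\beta)$, say $M \equiv \letbox{u}{\ibox{R}}{S} \redp S'[R/u]$ with $S \redp S'$; then $M^\star \equiv S^\star[R/u]$, the inductive hypothesis gives $S' \redp S^\star$, and Lemma~\ref{lem:substint} gives $S'[R/u] \redp S^\star[R/u]$.

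The load-bearing cases are the congruence rules $(\textsf{app})$ and $(\Box\textsf{let-cong})$, where the shape of $M^\star$ depends on the syntactic form of the function (resp. scrutinee), so one must invert the parallel reduction on that subterm. For $(\textsf{app})$, say $M \equiv M_1\,N_1 \redp N_1'\,N_2'$ with $M_1 \redp N_1'$ and $N_1 \redp N_2'$, and split on $M_1$. If $M_1 \equiv \lambda x{:}A.\ A_1$, inverting $M_1 \redp N_1'$ shows it was derived by $(\textsf{refl})$ or $(\textsf{cong}_\lambda)$, so $N_1' \equiv \lambda x{:}A.\ A_1'$ with $A_1 \redp A_1'$ (take $A_1' \equiv A_1$ in the $(\textsf{refl})$ subcase); in either subcase $A_1' \redp A_1^\star$ --- by the inductive hypothesis on the subderivation, or by Lemma~\ref{lem:reflstar} --- while $N_2' \redp N_1^\star$ by the inductive hypothesis, so $(\rightarrow\beta)$ gives $(\lambda x{:}A.\ A_1')\,N_2' \redp A_1^\star[N_1^\star/x] \equiv M^\star$. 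If $M_1$ is a constant partially applied so that $M$ is a genuine constant redex (${\supset_G}\ \textsf{true}\ A_1\ N_1$, ${\supset_G}\ \textsf{false}\ A_1\ N_1$, $\textsf{zero?}\ \widehat{0}$, $\textsf{succ}\ \widehat{n}$, and so on), then the same inversion --- using that constants reduce only by $(\textsf{refl})$ --- puts the reduct in the expected shape, and the inductive hypothesis together with the relevant rule ($(\supset_1)$, $(\supset_2)$, or the matching constant rule) closes the case. In every other case $M^\star \equiv M_1^\star\,N_1^\star$, and the inductive hypothesis with $(\textsf{app})$ suffices. Dually, for $(\Box\textsf{let-cong})$ write $M \equiv \letbox{u}{R_1}{S_1} \redp \letbox{u}{R_2}{S_2}$ with $R_1 \redp R_2$ and $S_1 \redp S_2$: if $R_1$ is literally $\ibox{R}$ then the blocking fact forces $R_2 \equiv \ibox{R}$, and since the inductive hypothesis gives $S_2 \redp S_1^\star$, rule $(\Box\beta)$ yields $\letbox{u}{\ibox{R}}{S_2} \redp S_1^\star[R/u] \equiv M^\star$; otherwise $M^\star \equiv \letbox{u}{R_1^\star}{S_1^\star}$ and the inductive hypothesis with $(\Box\textsf{let-cong})$ reassembles it.

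I expect the main obstacle to be exactly the bookkeeping in the $(\textsf{app})$ and $(\Box\textsf{let-cong})$ cases: one has to invert parallel reductions emanating from $\lambda$-abstractions, partially applied constants and $\ibox{(-)}$ in order to recover the precise form of the reduct, promote a congruence step to the corresponding redex rule, and cover the $(\textsf{refl})$ subcases via Lemma~\ref{lem:reflstar} rather than an inductive hypothesis. The one genuinely delicate point elsewhere is the side condition $x \notin \bfv{A_1'}$ required to apply Lemma~\ref{lem:redp} in the $(\rightarrow\beta)$ case, which is precisely why the antimonotonicity Lemma~\ref{lem:varmon} and the standing restriction against $\lambda$-bound variables under boxes both feature in the argument.
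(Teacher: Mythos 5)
Your proof is correct and follows essentially the same route as the paper: induction on the derivation of $M \redp P$, using Lemma~\ref{lem:reflstar} for $(\textsf{refl})$, Lemmas~\ref{lem:varmon} and~\ref{lem:redp} for $(\rightarrow\beta)$, and Lemma~\ref{lem:substint} for $(\Box\beta)$. The only difference is that you spell out the inversion argument for $(\textsf{app})$ and $(\Box\textsf{let-cong})$ --- where $M^\star$ contracts a redex that the congruence step did not --- which the paper compresses into ``the cases of congruence rules follow from the IH''; your more explicit treatment is the honest version of that claim.
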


\begin{proof}
  By induction on the generation of $M \redp P$. The case of
  $(\textsf{refl})$ follows by Lemma \ref{lem:reflstar}, and the
  cases of congruence rules follow from the IH. We show the rest.
  \begin{indproof}
    \indcase{$\rightarrow\beta$}
      Then we have $(\lambda x{:}A.\ M)N \redp M'[N'/x]$, with $M
      \redp M'$ and $N \redp N'$. By the IH, $M' \redp M^\star$
      and $N' \redp N^\star$. We have that $x \not\in \bfv{M}$,
      so by Lemma \ref{lem:varmon} we get that $x \not\in
      \bfv{M'}$. Hence, by Lemma \ref{lem:redp} we get $M'[N'/x]
      \redp M^\star[N^\star/x] \equiv \left(\left(\lambda x{:}A.\
      M\right)N\right)^\star$.

    \indcase{$\Box\beta$}
      Then we have \[
	\letbox{u}{\ibox{M}}{N} \redp N'[M/u]
      \] where $N \redp N'$. By the IH, $N' \redp N^\star$, so it
      follows that \[
	N'[M/u] \redp N^\star[M/u] \equiv \left(
	\letbox{u}{\ibox{M}}{N}\right)^\star
      \] by Lemma \ref{lem:substint}.

    \indcase{$\Box\textsf{fix}$}
      Then we have \[
	\fixlob{z}{M}
	  \redp{}
	M'[\ibox{(\fixlob{z}{M})}/z]
      \] where $M \redp{} M'$. By the IH, $M' \redp{} M^\star$.
      Hence \[
	M'[\ibox{(\fixlob{z}{M})}/z]
	  \redp{}
	M^\star[\ibox{(\fixlob{z}{M})}/z]
	  \equiv
	\left(\fixlob{z}{M}\right)^\star
      \] by Lemma \ref{lem:substint}.
    \indcase{$\Box\textsf{int}$}
      Similar.
  \end{indproof}
\end{proof}

\section{Some important terms}
  \label{sec:ipcfterms}

Let us look at the kinds of terms we can write in iPCF.

\begin{description}

\item[From the axioms of \textsf{S4}]

First, we can write a term corresponding to axiom \textsf{K}, the
\emph{normality axiom} of modal logics: \[
  \mathsf{ax_K} \defeq
    \lambda f : \Box (A \rightarrow B). \;
    \lambda x : \Box A . \;
      \letbox{g}{f}{\letbox{y}{x}{\ibox{(g\,y)}}}
\]
Then $\vdash \mathsf{ax_K} : \Box(A \rightarrow B) \rightarrow
(\Box A \rightarrow \Box B)$. An intensional reading of this is
the following: any function given as code can be transformed into
an \emph{effective operation} that maps code of type $A$ to code
of type $B$.

The rest of the axioms correspond to evaluating and quoting.
Axiom \textsf{T} takes code to value, or intension to extension:
\[
  \vdash \mathsf{eval}_A \defeq
    \lambda x : \Box A. \; \letbox{y}{x}{y}
      : \Box A \rightarrow A
\] and axiom \textsf{4} quotes code into code-for-code: \[
  \vdash \mathsf{quote}_A \defeq
    \lambda x : \Box A. \;
      \letbox{y}{x}{\ibox{\left(\ibox{y}\right)}}
    : \Box A \rightarrow \Box \Box A
\]

\item[The G\"odel-L\"ob axiom: intensional fixed points]

Since $(\Box\textsf{fix})$ is L\"ob's rule, we expect to be able
to write down a term corresponding to the G\"odel-L\"ob axiom of
provability logic. We can, and it is an \emph{intensional
fixed-point combinator}: \[
  \mathbb{Y}_A \defeq
    \lambda x : \Box (\Box A \rightarrow A). \;
      \letbox{f}{x}{
          \ibox{\left(\fixlob{z}{f\,z}\right)}
      }
\] and $\vdash \mathbb{Y}_A : \Box(\Box A \rightarrow A)
\rightarrow \Box A$. We observe that \[
  \mathbb{Y}_A(\ibox{M})
    \redt{}
  \ibox{\left(\fixlob{z}{(M\, z)}\right)}
\]

\item[Undefined]

The combination of $\textsf{eval}$ and intensional fixed points
leads to non-termination, in a style reminiscent of the term
$(\lambda x.\,xx)(\lambda x.\,xx)$ of the untyped
$\lambda$-calculus. Let \[
  \Omega_A \defeq \fixlob{z}{(\textsf{eval}_A\,z)}
\] Then $\vdash \Omega_A : A$, and \[
  \Omega_A
    \red{}\
  \textsf{eval}_A\;\left(\ibox{\Omega_A}\right)\
    \redt{}\
  \Omega_A
\]

\item[Extensional Fixed Points]

Perhaps surprisingly, the ordinary PCF $\mathbf{Y}$ combinator is
also definable in the iPCF.  Let \[
  \mathbf{Y}_A \defeq
    \fixlob{z}{
      \lambda f : A \rightarrow A.\
        f (\textsf{eval}\ z\  f)
    }
\] Then $\vdash \mathbf{Y}_A : (A \rightarrow A) \rightarrow A$,
so that \begin{align*}
  \mathbf{Y}_A
    \redt{}\
  &\lambda f : A \rightarrow A.\ f (\textsf{eval}\ (\ibox{\mathbf{Y}_A})\ f)) \\
    \redt{}\
  &\lambda f : A \rightarrow A.\ f (\mathbf{Y}_A\, f)
\end{align*}

\opt{th}{Notice that, in this term, the modal variable $z$ occurs
free under a $\lambda$-abstraction. This will prove important in
\S\ref{chap:intsem2}, where it will be prohibited.}

\end{description}

\section{Two intensional examples}
  \label{sec:ipcfexamples}

No discussion of an intensional language with intensional
recursion would be complete without examples that use these two
novel features. Our first example uses intensionality, albeit in a
`extensional' way, and is drawn from the study of PCF and issues
related to sequential vs. parallel (but not concurrent)
computation. Our second example uses intensional recursion, so it
is slightly more adventurous: it is a computer virus.

\subsection{`Parallel or' by dovetailing}

In \cite{Plotkin1977} Gordon Plotkin proved the following
theorem: there is no term $\textsf{por} : \textsf{Bool}
\rightarrow \textsf{Bool} \rightarrow \textsf{Bool}$ of PCF such
that $\textsf{por}\ \textsf{true}\ M \twoheadrightarrow_\beta
\textsf{true}$ and  $\textsf{por}\ M\ \textsf{true}
\twoheadrightarrow_\beta \textsf{true}$ for any $\vdash M :
\textsf{Bool}$, whilst $\textsf{por}\ \textsf{false}\
\textsf{false} \twoheadrightarrow_\beta \textsf{false}$.
Intuitively, the problem is that $\textsf{por}$ has to first
examine one of its two arguments, and this can be troublesome if
that argument is non-terminating. It follows that the
\emph{parallel or} function is not definable in PCF. In order to
regain the property of so-called \emph{full abstraction} for the
\emph{Scott model} of PCF, a constant denoting this function has
to be manually added to PCF, and endowed with the above rather
clunky operational semantics. See
\cite{Plotkin1977,Gunter1992,Mitchell1996,Streicher2006}.

However, the parallel or function is a computable \emph{partial
recursive functional} \cite{Streicher2006,Longley2015}. The way to
prove that is intuitively the following: given two closed terms
$M, N : \textsf{Bool}$, take turns in $\beta$-reducing each one
for a one step: this is called \emph{dovetailing}. If at any point
one of the two terms reduces to \textsf{true}, then output
\textsf{true}. But if at any point both reduce to \textsf{false},
then output \textsf{false}.

This procedure is not definable in PCF because a candidate term
$\textsf{por}$ does not have access to a code for its argument,
but can only inspect its value. However, in iPCF we can use the
modality to obtain access to code, and intensional operations to
implement reduction. Suppose we pick a reduction strategy
$\red{}_r$. Then, let us include a constant $\textsf{tick} :
\Box\textsf{Bool} \rightarrow \Box\textsf{Bool}$ that implements
one step of this reduction strategy on closed terms: \[
  \begin{prooftree}
    M \red_r N, \text{ $M, N$ closed}
      \justifies
    \textsf{tick}\ (\ibox{M}) \red{} \ibox{N}
  \end{prooftree}
\] Also, let us include a constant $\textsf{done?} :
\Box\textsf{Bool} \rightarrow \textsf{Bool}$, which tells us if a
closed term under a box is a normal form: \[
  \begin{prooftree}
    \text{$M$ closed, normal}
      \justifies
    \textsf{done?}\ (\ibox{M}) \red{} \textsf{true}
  \end{prooftree}
    \qquad
  \begin{prooftree}
    \text{$M$ closed, not normal}
      \justifies
    \textsf{done?}\ (\ibox{M}) \red{} \textsf{false}
  \end{prooftree}
\] These two can be subsumed under our previous scheme for
introducing intensional operations. The above argument is now
implemented by the following term: \begin{align*}
  \textsf{por} :\equiv\
    \mathbf{Y}(
      \lambda &\textsf{por}. \;
      \lambda x : \Box \textsf{Bool}. \; \lambda y : \Box \textsf{Bool}. \\
    &\supset_\textsf{Bool}
      (\textsf{done?} \; x)\
      &&(\textsf{lor} \; (\textsf{eval} \; x) (\textsf{eval} \; y)) \\
    & &&(\supset_\textsf{Bool} (\textsf{done?} \; y)\
      &&(\textsf{ror} \; (\textsf{eval} \; x) (\textsf{eval} \; y)) \\
    & && &&(\textsf{por} \; (\textsf{tick} \; x)
                            (\textsf{tick} \; y)))
\end{align*} where $\textsf{lor}, \textsf{ror} : \textsf{Bool}
\rightarrow \textsf{Bool} \rightarrow \textsf{Bool}$ are terms
defining the left-strict and right-strict versions of the `or'
connective respectively. Notice that the type of this term is
$\Box\textsf{Bool} \rightarrow \Box\textsf{Bool} \rightarrow
\textsf{Bool}$: we require \emph{intensional access} to the terms
of boolean type in order to define this function!

\subsection{A computer virus}

\emph{Abstract computer virology} is the study of formalisms that
model computer viruses. There are many ways to formalise viruses.
We will use the model of \opt{ipcf}{Adleman} \cite{Adleman1990},
where files can be interpreted either as data, or as functions. We
introduce a data type $F$ of files, and two constants \[
  \textsf{in} : \Box (F \rightarrow F) \rightarrow F
    \quad \text{ and} \quad
  \textsf{out} : F \rightarrow \Box (F \rightarrow F)
\] If $F$ is a file, then $\textsf{out}\ F$ is that file
interpreted as a program, and similarly for $\textsf{in}$. We
ask that $\textsf{out}\ (\textsf{in}\ M) \red{} M$,
making $\Box (F \rightarrow F)$ a retract of
$F$.\opt{th}{\footnote{Actually, in \S\ref{sec:buildipwps} and
\S\ref{sec:asmipcf} we will see it is very easy to construct
examples for the apparently more natural situation where
$\textsf{in} : \Box(\Box F \rightarrow F) \rightarrow F$,
$\textsf{out}: F \rightarrow (\Box F \rightarrow F)$, and
$\textsf{out}\ (\textsf{in}\ M) \red{}\ \textsf{eval}_{F
\rightarrow F}\ M $. Nevertheless, our setup is slightly more
well-adapted to virology.}} This might seem the same as the
situation where $F \rightarrow F$ is a retract of $F$, which
yields models of the (untyped) $\lambda$-calculus, and is not
trivial to construct \cite[\S 5.4]{Barendregt1984}. However, in
our case it is not nearly as worrying: $\Box(F \rightarrow F)$ is
populated by programs and codes, not by actual functions. Under
this interpretation, the pair $(\textsf{in}, \textsf{out})$
corresponds to a kind of G\"odel numbering---especially if $F$ is
$\mathbb{N}$.

In Adleman's model, a \emph{virus} is given by its infected form,
which either \emph{injures}, \emph{infects}, or \emph{imitates}
other programs. The details are unimportant in the present
discussion, save from the fact that the virus needs to have access
to code that it can use to infect other executables.  One can
hence construct such a virus from its \emph{infection routine}, by
using Kleene's SRT. Let us model it by a term \[
  \vdash \textsf{infect} : \Box (F \rightarrow F)
    \rightarrow F \rightarrow F
\] which accepts a piece of viral code and an executable file, and
it returns either the file itself, or a version infected with the
viral code. We can then define a term \[
  \vdash \textsf{virus}
    \defeq \fixlob{z}{(\textsf{infect}\ z)} :
      F \rightarrow F
\] so that \[
  \textsf{virus}\
    \redt{}\
  \textsf{infect}\ (\ibox{\textsf{virus}})
\] which is a program that is ready to infect its input with its
own code.

\section{Conclusion}

We have achieved the desideratum of an intensional programming
calculus with intensional recursion. There are two main questions
that result from this development.

First, does there exist a good set of \emph{intensional
primitives} from which all others are definable? Is there perhaps
\emph{more than one such set}, hence providing us with a choice of
programming primitives? Previous attempts aiming to answer this
question include those of \cite{Schurmann2001,Nanevski2002}.

Second, what is the exact kind of programming power that we have
unleashed?  Does it lead to interesting programs that we have not
been able to write before? We have outlined some speculative
applications for intensional recursion in \cite[\S\S
1--2]{Kavvos2017}. Is iPCF a useful tool when it comes to
attacking these?



\section*{Acknowledgements}

The author would like to thank Mario Alvarez-Picallo for their
endless conversations on types and metaprogramming, Neil Jones for
his careful reading and helpful comments, and Samson Abramsky for
suggesting the topic of intensionality. This work was supported by
the EPSRC (award reference 1354534).

\bibliographystyle{entcs}
\bibliography{\string~/cs/lib/library.bib}

\appendix

\end{document}